\title{Separations between Oblivious and Adaptive Adversaries\\for Natural Dynamic Graph Problems}
\author{%
    Aaron Bernstein\thanks{New York University, \texttt{bernstei@gmail.com}. Supported by Sloan Fellowship, Google Research Fellowship, NSF Grant 1942010, and Charles S.\ Baylis endowment at NYU.} \and
    Sayan Bhattacharya\thanks{University of Warwick, \texttt{S.Bhattacharya@warwick.ac.uk}.} \and
    Nick Fischer\thanks{Max Planck Institute for Informatics, \texttt{nfischer@mpi-inf.mpg.de}. Parts of this work were done while the author was affiliated with INSAIT, Sofia University ``St.\ Kliment Ohridski''. This work is partially funded by the Ministry of Education and Science of Bulgaria (support for INSAIT, part of the Bulgarian National Roadmap for Research Infrastructure).} \and
    Peter Kiss\thanks{University of Vienna, \texttt{peter.kiss@univie.ac.at}. This research was funded in whole or in part by the Austrian Science Fund (FWF) 10.55776/ESP6088024}\and
    Thatchaphol Saranurak\thanks{University of Michigan, \texttt{thsa@umich.edu}. Supported by NSF Grant CCF-2238138.}}
\date{}
\begin{document}

\maketitle
\pagenumbering{gobble}

\begin{abstract}
\noindent
We establish the first update-time separation between dynamic algorithms against \emph{oblivious} adversaries and those against \emph{adaptive} adversaries in natural dynamic graph problems, based on popular fine-grained complexity hypotheses.

Specifically, under the combinatorial BMM hypothesis, we show that every combinatorial algorithm against an adaptive adversary for the incremental maximal independent set problem requires $n^{1-o(1)}$ amortized update time. Furthermore, assuming either the 3SUM or APSP hypotheses, every algorithm for the decremental maximal clique problem needs $\Delta/n^{o(1)}$ amortized update time when the initial maximum degree is $\Delta \le \sqrt{n}$. 
These lower bounds are matched by existing algorithms against adaptive adversaries. In contrast, both problems admit algorithms against oblivious adversaries that achieve $\operatorname{polylog}(n)$ amortized update time \cite{BehnezhadDHSS19,ChechikZ19}. Therefore, our separations are exponential.

Previously known separations for dynamic algorithms were either engineered for contrived problems and relied on strong cryptographic assumptions \cite{BeimelKMNSS22}, or worked for problems whose inputs are not explicitly given but are accessed through oracle calls \cite{BateniEFHJMW23}.

As a byproduct, we also provide a separation between incremental and decremental algorithms for the triangle detection problem: we show a decremental algorithm with $\tilde{O}(n^{\omega})$ total update time, while every incremental algorithm requires $n^{3-o(1)}$ total update time, assuming the OMv hypothesis. To our knowledge this is the first separation of this kind.
\end{abstract}

\newpage
\tableofcontents

\newpage
\pagenumbering{arabic}

\section{Introduction}
In the setting of dynamic algorithms, an \emph{adversary} generates a sequence of updates to the input, and a dynamic algorithm aims to maintain useful information about the input throughout the updates with minimal update time. The adversary is \emph{oblivious} if its update sequence is fixed from the beginning and, in particular, is independent of the algorithm's output. Otherwise,  the adversary is \emph{adaptive} if the next update can be adaptively chosen based on the algorithm's previous outputs.

An algorithm is \emph{robust} or works against an adaptive adversary if its performance is guaranteed even against an adaptive adversary. Otherwise, if it only works against an oblivious adversary, it is called \emph{non-robust}. Robustness against adaptive adversaries is needed in a wide range of real-world applications and is well-studied across many areas of computer sciences, such as streaming algorithms~\cite{Ben-EliezerJWY22,HassidimKMMS22,AssadiCGS23}, online algorithms~\cite{Ben-DavidBKTW94}, and adaptive data analysis~\cite{DworkFHPRR15,HardtU14}. Robustness is especially critical for dynamic algorithms as it enables their use as subroutines in static algorithms, which is a key application of dynamic algorithms.

\paragraph{Success in Making Algorithms Robust.}
For many dynamic graph problems, the first efficient algorithms were non-robust, with robust algorithms lagging far behind. But while many gaps remain, in the past decade, we have witnessed many successful lines of work on developing robust dynamic algorithms that match the near-optimal guarantees of the non-robust ones~\cite{RodittyZ08,KapronKM13,BaswanaKS12,HenzingerKN18,AbrahamDKKP16,Chechik18,BernsteinGW19}. This includes robust algorithms for spanning forests~\cite{NanongkaiS17,Wulff-Nilsen17,NanongkaiSW17,ChuzhoyGLNPS20,GoranciRST21}, approximate shortest paths~\cite{BernsteinC16,BernsteinC17,Chuzhoy21,ChuzhoyS21,BernsteinGS21,ChuzhoyZ23,HaeuplerLS24}, graph sparsifiers~\cite{ChenGHPS20,BernsteinBGNSSS22,BhattacharyaSS22,KyngMG24}, and reachability~\cite{BernsteinGS20,ChenKLMG24,BrandCKLMGS24}. These data structures and their techniques have been the core powerhouse behind breakthroughs in static algorithms, including almost-linear time algorithms for min-cost flow~\cite{ChenKLPGS25,BrandCPKLGSS23} and Gomory-Hu trees~\cite{AbboudKLPST22,AbboudLPS23}.

Recently, another significant barrier was overcome for symmetry-breaking problems, which include maximal independent set, maximal matching, and $(\Delta+1)$-vertex coloring. These problems admit near-optimal non-robust algorithms with polylogarithmic update time~\cite{BaswanaGS18,BhattacharyaCHN18,ChechikZ19,BehnezhadDHSS19}, but no robust algorithms with $o(n)$ update time were previously known. This past year, however,~\cite{BehnezhadRW25} and~\cite{BernsteinBKS25} introduced new robust dynamic algorithms with truly sublinear $n^{1-\Omega(1)}$ update times for $(\Delta+1)$-vertex coloring and maximal matching, respectively. Given these positive evidences over the years, can we hope to always almost match the performance of non-robust algorithms with robust ones?

\paragraph{Previous Separations.}
Strictly speaking, the answer is no. Consider this artificial problem. Initially, we are given an array $A=[A_1, \dots, A_n]$ of $n$ numbers. For each update, the adversary chooses an index $i \in [n]$ and marks it as deleted. The goal is to maintain a tuple $(i, A_i)$ where $i$ is not yet deleted, if it exists. Observe that an adaptive adversary can force every algorithm to update its answers $n$ times by deleting $i$ whenever $(i, A_i)$ is outputted. But, assuming an oblivious adversary, if we maintain $(i, A_i)$ where $i$ is a random undeleted index, then our maintained index $i$ will be deleted at most $O(\log n)$ times in total with high probability. Thus, only $O(\log n)$ entries need to be queried. Therefore, if the input array is provided as an oracle, it is quite simple to achieve an exponential separation between oblivious and adaptive adversaries.

The above separation, however, has two drawbacks:
\begin{enumerate}
\item The problem is artificial, and
\item The input must be prepared as an oracle. This model does not capture most standard dynamic problems.
\end{enumerate}
Regarding the second drawback, note that the above $O(\log n)$-time algorithm cannot even read the entire input. In contrast, in most dynamic graph, string, and geometric problems, the input and updates are given explicitly, requiring every algorithm to spend at least linear time to read the input.

Unfortunately, all known separations for dynamic algorithms still suffer from one of the above drawbacks. Specifically, the separations by Beimal et al.~\cite{BeimelKMNSS22} were engineered for artificial problems, one of them similar to the example above, and also assumed strong cryptographic assumptions. Another separation by Bateni et al.~\cite{BateniEFHJMW23} was for dynamic $k$-clustering problems which assume an oracle for the input and are hence not using the standard measure of update time. Furthermore, their lower bound only applies to deterministic algorithms and does not strictly separate robust from non-robust algorithms.\footnote{The adversary in the lower bound of~\cite{BateniEFHJMW23} is \emph{metric-adaptive} and, so it cannot capture most randomized $k$-clustering algorithms even in the static setting (see, e.g.,~\cite{MettuP04}) which assume a weaker \emph{point-adaptive} adversary. The robust randomized algorithm by~\cite{BhattacharyaCF25} beats the lower bound of~\cite{BateniEFHJMW23} because of this.} Until now, there is no oblivious-vs-adaptive separation for \emph{natural} dynamic problems with explicit input.

\subsection{Our Results: Natural Oblivious-vs-Adaptive Separations}
We establish the first update-time separations between non-robust and robust dynamic algorithms for natural dynamic problems with explicit input, assuming popular fine-grained complexity hypotheses, detailed in \Cref{subsec:hypo}. Recall that incremental (decremental) graphs are graphs undergoing only edge insertions (deletions).

\paragraph{Incremental MIS.}
Our first separation is for the dynamic \emph{maximal independent set} (MIS) problem~\cite{AssadiOSS18,AssadiOSS19,BehnezhadDHSS19,ChechikZ19,GuptaK21}. In this problem, given an $n$-vertex graph undergoing edge updates, we must maintain an MIS. This problem admits fully dynamic non-robust algorithms with $\text{polylog}(n)$ update time, handling both edge insertions and deletions~\cite{BehnezhadDHSS19,ChechikZ19}. We show that, even in the incremental setting, every robust algorithm requires large update time.

\begin{theorem} \label{thm:hardness MIS}
Assuming the Boolean-Matrix-Multiplication (BMM) hypothesis, every algorithm against an adaptive adversary for maintaining an MIS on an incremental graph with $n$ vertices requires $\Omega(n^{\omega-o(1)})$ total update time, where $\omega\le2.372$ is the matrix multiplication exponent. That is, the amortized update time is at least $\Omega(n^{\omega-2-o(1)})$.

For combinatorial\footnote{\emph{Combinatorial} algorithms informally refer to algorithms that do not use fast matrix multiplication.} algorithms, the lower bound improves to $\Omega(n^{1-o(1)})$ amortized update time.
\end{theorem}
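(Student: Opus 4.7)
The plan is to reduce Boolean Matrix Multiplication (BMM) on two $n \times n$ matrices to an instance of incremental MIS on $\Theta(n)$ vertices undergoing $\Theta(n^2)$ adaptively chosen edge insertions. If an MIS algorithm against an adaptive adversary has amortized update time $T(n)$, the reduction produces $AB$ in $O(n^2 \cdot T(n))$ time. The combinatorial BMM hypothesis then forces $T(n) \geq n^{1-o(1)}$, and the general BMM hypothesis forces $T(n) \geq n^{\omega - 2 - o(1)}$, matching the stated lower bounds after multiplying by $n^2$ updates.

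Given $A, B \in \{0,1\}^{n\times n}$, I construct a graph on $\Theta(n)$ vertices consisting of a row layer $R = \{r_1, \dots, r_n\}$; a paired index layer $K \cup K' = \{k_1,\dots,k_n\} \cup \{k_1',\dots,k_n'\}$ joined by a perfect matching; and $O(1)$ column-gadget vertices $f_c, g_c, d_c$ per column $c \in [n]$. First I issue the $\Theta(n^2)$ ``setup'' insertions that encode $A$: every matching edge $k_s k_s'$ and every edge $r_i k_s$ with $A_{is} = 1$. Then I process columns of $B$ one at a time. In column $c$ I insert the $\Theta(n)$ edges $g_c \sim k_s$ for $s$ with $B_{sc} = 0$ together with $f_c \sim k_s'$ for $s$ with $B_{sc} = 1$, and read off column $c$ of $AB$ by querying which $r_i$'s the algorithm currently places in the MIS. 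Finally I insert $O(1)$ ``retirement'' edges that force $f_c, g_c$ out of the MIS before advancing to column $c+1$, so that the gadget for column $c$ does not interfere with subsequent columns.

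The design is calibrated so that in the intended MIS after the column-$c$ insertions, $f_c$ and $g_c$ are in the MIS, and $k_s \in \mathrm{MIS}$ iff $B_{sc} = 1$ (with $k_s'$ taking the complementary role). Consequently $r_i \in \mathrm{MIS}$ iff none of its $K$-neighbors $\{k_s : A_{is} = 1\}$ is in the MIS, which happens iff there is no $s$ with $A_{is} = B_{sc} = 1$, i.e.\ iff $(AB)_{ic} = 0$. Since the total number of insertions across all phases is $\Theta(n^2)$, any MIS algorithm with amortized update time $T(n)$ completes the reduction in $O(n^2 \cdot T(n))$ time; invoking the combinatorial or the general BMM hypothesis then concludes the argument.

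The main obstacle is that the intended MIS above is not the only valid MIS of the current graph: for instance, the algorithm could place $g_c$ outside the MIS and instead put some $k_s$ with $B_{sc} = 0$ inside, which would spuriously push an $r_i$ with $A_{is} = 1$ out and corrupt the answer. This is exactly where adaptive adversariality is essential: after observing the algorithm's MIS, the adversary can insert further edges tailored to evict the offending vertices and coerce the MIS into the intended form, paying only $O(n)$ additional insertions per column but amplifying the work the algorithm must perform. Designing such amplification gadgets so that they (i) use only $O(n)$ insertions per column, (ii) provably drive \emph{every} resulting MIS to agree with the intended one on all $r_i$'s regardless of tie-breaking, and (iii) retire cleanly so that $K, K'$ can freely toggle in the next column, is the main technical content of the proof.
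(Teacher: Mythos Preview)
Your proposal is not a proof but an outline, and you explicitly acknowledge that the ``main technical content''---designing the amplification gadgets that coerce \emph{every} MIS the algorithm could output into your intended one, in only $O(n)$ insertions per column, and then retire cleanly so that the $K,K'$ layer is reusable---is still missing. That is a genuine gap, and it is not clear the gap can be filled at all. The difficulty is structural: in the incremental model you cannot delete edges, so every coercion edge you insert in column $c$ persists into all later columns. Forcing $f_c,g_c$ out of the MIS via a dummy $d_c$ neutralizes their direct constraints, but it does not by itself cause the algorithm's MIS to flip the status of each $k_s$ to match $B_{s,c+1}$; the MIS it currently holds (with whatever tie-breaking it chose) is still maximal after retirement. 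You would need, for each column, a fresh mechanism that \emph{provably} forces the parity of every $k_s/k_s'$ pair, irrespective of the algorithm's earlier choices and of the accumulated edges from prior columns, using only $O(n)$ insertions---and you have not supplied one.

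By contrast, the paper avoids all of this. It reduces from \emph{triangle detection} (which is BMM-hard via All-Edges Triangle Detection) rather than from BMM directly. The key idea is the MIS/maximal-clique duality: an MIS in the complement graph is a maximal clique in $G$, so an incremental MIS algorithm on $G'^c$ is a decremental maximal-clique algorithm on $G'$. The reduction takes two disjoint copies of $G^c$, runs the MIS algorithm, and repeatedly inspects the reported MIS $K$: if three unmarked vertices of one copy lie in $K$, they form a triangle in $G$; if only two do, that edge of $G$ is certifiably triangle-free and the corresponding complement edge is inserted; if only one does, that vertex is marked and saturated by edge insertions. Simple invariants show this terminates after $O(n^2)$ insertions having either found a triangle or certified none exists. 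There are no per-column gadgets, no coercion step, and no reuse problem; the adaptive adversary is used only to decide \emph{which} edge to insert next based on the current MIS, not to force a specific global configuration.
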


Assuming $\omega>2$, this gives an exponential separation since every robust algorithm requires polynomial update time, while non-robust algorithms only take polylogarithmic update time. Our lower bound is also \emph{tight} with known combinatorial robust algorithms~\cite{AssadiOSS18,GuptaK21} that achieve $O(n)$ amortized update time in dense graphs.

\paragraph{Decremental Maximal Clique.}
Our second separation is exponential even when $\omega=2$. Below, we present a tight lower bound for robust algorithms for the decremental \emph{maximal clique} problem.

\begin{theorem}\label{thm:hardness clique}
Assuming the 3SUM or All-Pairs Shortest Paths (APSP) hypothesis, every algorithm against an adaptive adversary for maintaining a maximal clique in each connected component of a decremental graph with $n$ vertices and initial maximum degree $\Delta\le\sqrt{n}$ requires at least $\Omega((n \Delta^2)^{1-o(1)})$ total update time. That is, the amortized update time is at least $\Omega(\Delta/n^{o(1)})$.
\end{theorem}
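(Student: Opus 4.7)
My plan is to reduce from the 3SUM- and APSP-hard problem of listing every triangle in a sparse graph. By Kopelowitz, Pettie and Porat (and an analogous APSP-based reduction), this task requires $\Omega(m^{4/3-o(1)})$ time on a graph with $m$ edges. Applied to an input with $N$ vertices and maximum degree $\Delta \le \sqrt{N}$ so that $m = O(N\Delta)$, this lower bound becomes $(N\Delta)^{4/3-o(1)}$, which is at least $(N\Delta^2)^{1-o(1)}$ throughout the regime $\Delta \le \sqrt{N}$, with equality at $\Delta = \sqrt{N}$. Thus it suffices to exhibit a reduction from sparse triangle listing to decremental maximal clique that uses $O(N\Delta)$ deletions and $O(N\Delta)$ additional time outside the dynamic data structure.

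For the reduction, I would initialise the decremental graph $G_0$ to a ``triangle-localised'' transform of the input $H$, on $n = \Theta(N)$ vertices with maximum degree $\Theta(\Delta)$. The transform should ensure that every triangle of $H$ is realised as a triangle in some component of $G_0$ and, conversely, any maximal clique of size $\ge 3$ that the algorithm reports in a component can be translated back into a triangle of $H$ in constant time. The adaptive adversary then proceeds in rounds: it inspects the current maximal cliques maintained by the algorithm, and in any component whose reported clique has size $\ge 3$ it records the witnessed triangle and deletes one of its edges in order to destroy that clique and force the algorithm to expose a different one. Iterating until every component has maximal-clique size at most $2$ enumerates the set of triangles of $H$.

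The correctness then reduces to an accounting step: the adversary issues at most $O(N\Delta)$ deletions and does $O(1)$ bookkeeping per round, so its own running time is $o((N\Delta^2)^{1-o(1)})$; hence the dynamic algorithm must account for the full $\Omega((N\Delta^2)^{1-o(1)})$ time lower bound inherited from sparse triangle listing. The main obstacle will be the design of the gadget transform, which must simultaneously (i) keep the vertex count at $\Theta(N)$ and the maximum degree at $\Theta(\Delta)$, (ii) ensure that every triangle of $H$ is eventually revealed by the sequence of maximal cliques the algorithm emits, not just those that happen to lie inside the initial maximal clique, and (iii) prevent large maximal cliques from ``hiding'' many triangles at once, since then a single deletion would yield only $O(1)$ new information while the algorithm might amortise its work across many implicit triangles. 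A natural way to enforce (iii) is to bound the clique size in each component by a small constant via vertex duplication, at the cost of inflating $n$ and $\Delta$ by only constant factors; verifying that such a construction exists and remains robust against the adaptive adversary's reaction is the technical heart of the argument.
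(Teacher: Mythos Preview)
Your reduction has a genuine gap: deleting one edge of a reported triangle destroys \emph{every} triangle through that edge, not just the one you recorded. So ``iterating until every component has maximal-clique size at most 2'' does \emph{not} enumerate the set of triangles of $H$; it enumerates an arbitrary maximal set of edge-disjoint triangles, which can be far smaller. Your point (iii) worries about large cliques hiding many triangles behind a single deletion, but the same issue already arises at clique size 3: a single edge can participate in $\Theta(\Delta)$ triangles, and one deletion kills them all. No gadget that merely bounds clique size by a constant fixes this. So reducing from triangle \emph{listing} along these lines does not work.

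The paper sidesteps this by reducing from \emph{All-Edges Triangle Detection} (decide for each edge whether it lies in some triangle), which is known to require $(n\Delta^2)^{1-o(1)}$ time under 3SUM or APSP when $\Delta \le \sqrt{n}$. The reduction is then almost trivial, and no gadget transform is needed at all: feed the tripartite instance $G=(X,Y,Z,E)$ directly to the decremental maximal-clique algorithm. Tripartiteness already caps every clique at size 3. When a component's maximal clique $K$ has $|K|=3$, it is a triangle $(x,y,z)$; report YES for the cross-edge $(x,z)$ and delete $(x,z)$ --- this may destroy other triangles through $(x,z)$, but that edge's question is already answered. When $|K|=2$, maximality certifies that this edge is in \emph{no} triangle, so deleting it is safe for every remaining edge's question. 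Continue until the graph is empty. This answers all edges correctly using $O(n\Delta)$ deletions and $O(n\Delta)$ extra work, so the dynamic algorithm must absorb the full $(n\Delta^2)^{1-o(1)}$ cost. The two ideas you were missing are (a) switching the hard source problem from listing to all-edges detection, and (b) noticing that the tripartite instance already has the small-clique property, so no ``triangle-localised transform'' is required.
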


The same lower bound also applies to maintaining a \emph{single} maximal clique if vertex deletions are also allowed. The requirement on every connected component is simply to prevent algorithms from always outputting a trivial set $\{v\}$ once a singleton connected component $\{v\}$ occurs.

To complete the separation, we also show a non-robust algorithm with near-optimal update time.

\begin{theorem}\label{thm:alg clique}
There is an algorithm against an oblivious adversary that, with high probability, maintains a maximal clique in every connected component of a decremental graph with $n$ vertices in $\polylog(n)$ amortized update time.
\end{theorem}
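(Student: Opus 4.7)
The plan is to maintain the maximal clique of each connected component by running an oblivious-adversary maximal independent set (MIS) algorithm on the induced complement graph of that component, combined with a decremental connectivity data structure to handle component dynamics.

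The key observation is that a maximal clique in $G[C]$ is precisely a maximal independent set in $\overline{G[C]}$, and edge deletions in $G$ correspond to edge insertions in $\overline{G}$. Thus, for each connected component $C$, the problem reduces to maintaining a fully dynamic MIS on $\overline{G[C]}$ (the fully dynamic regime is needed to simulate vertex removals during component splits), which admits $\polylog(n)$ amortized update time against an oblivious adversary via the algorithms of \cite{BehnezhadDHSS19,ChechikZ19}. I would maintain a Holm--Lichtenberg--Thorup decremental connectivity structure on $G$ to detect component splits, and for each component $C$ run a separate MIS instance on $\overline{G[C]}$; when an edge $(u,w)$ of $G$ is deleted, I feed the corresponding edge insertion to the MIS instance of $u$ and $w$'s component.

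To handle component splits, when a component $C$ splits into $C_1$ and $C_2$ with $|C_2| \le |C_1|$, I would reuse the MIS data structure of $C$ for $C_1$ by simulating vertex deletions of the $C_2$-vertices (implemented as edge deletions within the MIS structure) and initialize a fresh MIS instance for $C_2$. By a standard heavy--light rebuild argument, each vertex participates in the smaller side of a split at most $O(\log n)$ times, which bounds the total split-related simulated-edge-deletion work.

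The main obstacle is bounding the total initialization and maintenance cost to $\polylog(n)$ amortized per edge deletion for arbitrary input graphs. The naive cost of initializing MIS on $\overline{G[C]}$ is $\tilde{O}(|C|^2)$, which is fine for dense inputs but can dominate for sparse inputs. Overcoming this likely requires a hybrid approach: for sparse components (in which the maximal clique is necessarily small, since a $k$-clique requires $\binom{k}{2}$ edges), maintain the clique directly over the actual $G$-edges via a randomized greedy built from a uniformly random vertex ordering, rebuilding locally whenever an in-clique edge is deleted; for dense components, use the MIS-on-complement reduction above. Calibrating the density threshold and integrating the two regimes with the heavy--light split rule via careful amortization is the principal technical challenge.
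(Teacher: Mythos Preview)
Your proposal correctly identifies the main obstacle but does not resolve it. Running MIS on the full complement $\overline{G[C]}$ incurs $\tilde O(|C|^2)$ cost both at initialization and during splits: removing the $C_2$-vertices from the larger side's MIS instance means deleting up to $|C_2|\cdot|C_1|$ complement edges, so even with the heavy--light bound $\sum |C_2|=O(n\log n)$ the total number of simulated updates can be $\tilde\Theta(n^2)$ on a sparse graph. Your hybrid sparse/dense plan is left as a sketch, and the ``randomized greedy clique on the original edges'' idea is essentially random-order MIS on the complement again, so it does not by itself sidestep the quadratic barrier.

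The paper avoids all of this with a single clean trick that you are missing: rather than complementing the whole component, pick a \emph{pivot} vertex $v$ of minimum degree and run the MIS algorithm only on $\overline{G[N_G(v)]}$; any MIS $S$ there yields the maximal clique $\{v\}\cup S$ in $G$. Since $\deg(v)^2\le n\cdot\deg(v)\le 2m$, the complement instance has at most $O(m)$ edges, so the black-box MIS algorithm runs in $\tilde O(m)$ total time regardless of the density of $G$. To cover every connected component, the paper maintains the unique component of size $>n/2$ (if any) using $\Theta(\log n)$ parallel copies with uniformly random pivots---against an oblivious adversary at least one pivot lies in that component with high probability, and a random pivot has expected degree equal to the average degree, giving $\tilde O(m)$ expected time. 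Smaller components are handled by straightforward recursion, with each vertex charged to at most $O(\log n)$ levels. This pivot-to-neighborhood reduction is the missing ingredient in your outline.
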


Note that, although a maximal clique is an MIS on the complement graph, \Cref{thm:alg clique} does not immediately follow from dynamic MIS algorithms for two reasons. The main reason is that the complement graph can be significantly denser than the original graph, which could lead to an $O(n^2)$ \emph{initialization time} in a sparse graph. A second reason is that Theorem \ref{thm:alg clique} needs to maintain a maximal clique in every connected component.  We show that the theorem can nonetheless be obtained via a simple reduction. 

\Cref{thm:hardness clique,thm:alg clique} together provide an exponential separation even when $\omega=2$. Moreover, our lower bound is tight with a robust decremental maximal clique algorithm that has $O(\Delta)$ amortized update time, as shown in \Cref{sec:clique}.

Finally, we emphasize that while establishing a separation between adaptive and oblivious adversaries for natural problems has been a longstanding open question in the dynamic algorithms community, our main results are surprisingly simple and can be taught in undergraduate lectures.
The hardness of maximal clique (\Cref{thm:hardness clique}) is via a reduction from triangle detection that is even ``straightforward in hindsight''. Via a simple extension, we then obtain the hardness of MIS (\Cref{thm:hardness MIS}). Given the significance of the question, we view this simplicity as a virtue.

\subsection{Byproduct: First Incremental-vs-Decremental Separation}
\Cref{thm:hardness MIS} leaves open the question of whether there is a separation for dynamic MIS when $\omega=2$. Our further investigation into this question surprisingly leads us to the first separation between the incremental and decremental settings. Below, we discuss this interesting byproduct and then explain its implication for our original motivation.

\paragraph{Incremental-vs-Decremental Separation.}
Recall that a triangle is a clique of size 3. We show that triangle detection in incremental graphs is strictly harder than in decremental graphs.

\begin{restatable}[Incremental triangle detection is hard]{theorem}{thminctriangle}\label{thm:inc triangle}
Assuming the OMv hypothesis, every algorithm that can report YES once there exists a triangle in an incremental graph requires $n^{3-o(1)}$ total update time.
\end{restatable}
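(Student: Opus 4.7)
I plan to reduce from the OuMv problem (online Boolean matrix--vector--vector multiplication): preprocess a Boolean $n \times n$ matrix $M$, then online receive $n$ query pairs $(u_t, v_t) \in \{0,1\}^n \times \{0,1\}^n$, and output $u_t^\top M v_t$ before the next pair arrives. Under the OMv hypothesis, OuMv requires $n^{3-o(1)}$ total time.

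\textbf{Construction.} I would build a tripartite graph on vertex sets $A, B, C$ of size $n$ each. Preprocess by inserting each edge $(a_i, b_j)$ with $M_{ij}=1$; the resulting bipartite graph is triangle-free. In round $t$, insert $(c_t, a_i)$ for each $i$ with $(u_t)_i = 1$ and $(c_t, b_j)$ for each $j$ with $(v_t)_j = 1$. Since any triangle must take the form $(a_i, b_j, c_t)$, such a triangle involving $c_t$ exists if and only if $u_t^\top M v_t = 1$. Feeding these insertions to the hypothetical incremental triangle detection data structure and observing when it first outputs YES identifies the earliest round $t^*$ with $u_{t^*}^\top M v_{t^*} = 1$. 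This solves the ``first-YES'' decision version of OuMv in time $T(n) + O(n^2)$, where $T(n)$ is the data structure's total update time and the $O(n^2)$ accounts for the matrix-edge initialization and the $O(n)$-edge query rounds.

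\textbf{Main obstacle.} The delicate point is that a single run of the incremental data structure reveals only one bit of information (the round in which the first triangle appears, if any), whereas OuMv has $n$ bits of output, so the naive reduction yields only a $n^{2-o(1)}$ lower bound. The crux of the proof is therefore to argue that the weakened decision variant to which we reduce still retains the full $n^{3-o(1)}$ hardness of OMv. I would handle this either by invoking the standard $\gamma$-OMv equivalences of Henzinger--Krinninger--Nanongkai--Saranurak, or by padding the OuMv instance with triangle-free rounds so that a single YES is guaranteed to occur only at the final round, thereby reducing certification of this pattern to solving OuMv online in its entirety. Pinning down the precise OMv variant, and confirming it remains $n^{3-o(1)}$-hard under the hypothesis, is the main technical obligation.
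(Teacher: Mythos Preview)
You have correctly identified the central obstacle: a single run of an incremental triangle detector leaks only the index of the first YES round, whereas OuMv has $n$ output bits. Unfortunately, neither of your two proposed remedies closes this gap.

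\emph{Padding.} In an incremental graph, once a triangle is present it remains present forever. So if round $t$ produces a triangle through $c_t$, every later state still contains that triangle regardless of what you insert afterwards; there is no way to ``push'' the first YES to the final round by inserting triangle-free rounds. The only way to continue answering rounds $t{+}1,\dots,n$ is to start a \emph{fresh} instance with the $M$-edges reinserted. But there can be $\Theta(n)$ YES rounds, hence $\Theta(n)$ restarts, each on a $\Theta(n)$-vertex graph, so a hypothetical $O(N^{3-\epsilon})$ detector yields only an $O(n^{4-\epsilon})$ OuMv algorithm --- no contradiction.

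\emph{Invoking HKNS equivalences.} The $\gamma$-OMv and $1$-uMv variants change the aspect ratio of $M$ or the number of queries, but none of them gives you, for free, $n^{3-o(1)}$ hardness for the ``first-YES'' decision version of OuMv on a single $3n$-vertex instance. In particular, $1$-uMv with parameters $n_1=n_2=n$ only rules out $o(n^2)$ query time after polynomial preprocessing, which an $O(N^{3-\epsilon})$ incremental detector on $N=\Theta(n)$ vertices does not violate.

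The missing idea, which the paper supplies, is to \emph{split the OuMv instance into many small instances} so that restarts become cheap. Partition $X$ and $Y$ into $g$ groups of size $O(n/g)$, and maintain a separate incremental instance $I_{i,j}$ on $O(n/g)$ vertices for each pair $(i,j)\in[g]^2$, with its own pool of $\Theta(n/g)$ fresh ``$c$''-vertices. A query $(X',Y')$ is answered by probing the $g^2$ instances in order, stopping at the first YES. An instance is reset only when it reports YES (at most $n$ times total, since each query stops at its first YES) or when it exhausts its $c$-pool ($O(g)$ times per instance, $O(g^3)$ total). With $g=\lceil n^{1/3}\rceil$ the total number of resets is $O(n)$, each on an $O(n^{2/3})$-vertex graph, so an $O(N^{3-\epsilon})$ detector gives an $O(n\cdot n^{(2/3)(3-\epsilon)})=O(n^{3-2\epsilon/3})$ OuMv algorithm, contradicting OMv. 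This grouping-and-reset trick is the crux; without it the reduction does not go through.
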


Note that the above lower bound is for the decision version of the triangle detection problem and, consequently, holds even when an adversary is oblivious.\footnote{For decision problems the algorithm's answer is uniquely determined, so an adaptive adversary is not stronger than an oblivious one.}

\begin{restatable}[Decremental triangle detection is easy]{theorem}{thmdectriangle}\label{thm:dec triangle}
There is a randomized algorithm against an adaptive adversary that, with high probability, maintains a triangle in a decremental graph and reports NO once no triangle exists anymore in $\tilde{O}(n^{\omega})$ total update time.
\end{restatable}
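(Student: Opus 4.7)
My plan is to combine one fast matrix multiplication at initialization with a lightweight data structure that lazily maintains a pool of witness triangles, triggering another full matrix multiplication only when this pool runs empty. At initialization, I would compute the boolean product $A \cdot A$ together with a witness matrix $W$ in $\tilde{O}(n^\omega)$ time using the randomized Boolean Product Witness Matrix (BPWM) algorithm of Alon--Naor, so that $W[u,v]$ is a common neighbor of $u$ and $v$ whenever one exists. Form the pool $\mathcal{S} = \{(u,v) \in E : W[u,v] \in N(u) \cap N(v)\}$ and, for every pair of vertices $(w,x)$, the inverted list $\mathcal{L}(w,x) = \{(u,v) \in \mathcal{S} : W[u,v] = w \text{ and } x \in \{u,v\}\}$. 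At all times the algorithm outputs any $(u,v) \in \mathcal{S}$ together with its triangle $(u,v,W[u,v])$.

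On each deletion $(a,b)$ I walk through $\mathcal{L}(a,b)$ and $\mathcal{L}(b,a)$ to remove the pool-members whose witness just became invalid, and I also remove $(a,b)$ itself; each removal takes $O(1)$ amortized time. When $\mathcal{S}$ becomes empty I perform a \emph{rebuild}: rerun BPWM on the current graph in $\tilde{O}(n^\omega)$ time to refresh $W$ and $\mathcal{S}$. If the rebuild still returns an empty pool, then with high probability the current graph contains no triangle and I output \textsf{NO}. Correctness is immediate from the BPWM guarantee that a witness is found for every pair that has one.

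To obtain total running time $\tilde{O}(n^\omega)$ it suffices to bound the expected number of rebuilds by $\operatorname{polylog}(n)$. The essential randomness comes from BPWM, which returns witnesses that are (essentially) uniform over the set of common neighbors for each pair. Consequently, although the adaptive adversary sees the currently output witness, the witnesses for the other pool-members remain hidden, so each deletion invalidates only a polylogarithmic number of pool-members in expectation. Combined with $|\mathcal{S}| \leq \binom{n}{2}$ and the fact that the total number of deletions is $O(n^2)$, this yields only $\operatorname{polylog}(n)$ rebuilds, and therefore total update time $\tilde{O}(n^\omega)$.

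The main obstacle I anticipate is making this adversary-vs-randomness argument airtight: since the adversary observes the entire history of outputs and adapts its deletions accordingly, one must carefully condition the potential-function analysis on what is revealed, or alternatively use a fresh-randomness-per-rebuild strategy, to show that the adversary cannot strategically concentrate deletions so as to force an unusually large number of witness invalidations.
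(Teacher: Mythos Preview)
Your high-level framework---precompute a pool of triangle witnesses, report from the pool, and rebuild when it empties---matches the paper's. But the analysis you sketch has a genuine gap, and the paper's fix is substantially more involved than ``make the adversary argument airtight.''

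\medskip
\textbf{The per-deletion claim is false.} Consider vertices $a,b_1,\dots,b_k,c_1,\dots,c_k$ with edges $\{a,b_i\}$, $\{a,c_j\}$, $\{b_i,c_j\}$ for all $i,j$. Each edge $\{b_i,c_j\}$ has $a$ as its \emph{unique} common neighbor, so $W[b_i,c_j]=a$ deterministically. Deleting the single edge $\{a,b_1\}$ then invalidates all $k$ pool members $(b_1,c_j)$---this is $\Theta(k)$, not $\operatorname{polylog}$. No adaptivity is needed here; the witnesses are forced. So ``each deletion invalidates $\operatorname{polylog}$ pool members in expectation'' is simply wrong, and the chain from that claim to ``$\operatorname{polylog}$ rebuilds'' collapses.

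\medskip
\textbf{What the paper does instead.} The paper observes exactly the pitfall you are heading toward: standard BPWM/isolation gives witnesses that are uniform \emph{marginally} but not \emph{independent} across edges, so one cannot get the needed concentration. It therefore defines, for each edge $e$, a value $v_G(e)$ (roughly, the ``fair share'' of triangle-credit $e$ receives), and computes a \emph{balanced} triangle set $T$ in which each edge $e$ appears at most $\alpha\cdot v_G(e)$ times, via a layered subsampling procedure rather than BPWM. The rebuild bound is then not a per-deletion count at all, but a two-pronged potential argument: after each stage, either $|E|$ drops by a $(1-\Omega(1/\alpha))$ factor, or for an $\Omega(1/\alpha)$ fraction of surviving edges the triangle-count $\tau(e)$ drops by a $(1-\Omega(1/\alpha))$ factor. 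Either way gives $\operatorname{polylog}$ stages. This potential argument (not any hiding-of-randomness from the adversary) is the missing idea in your plan; the ``fresh randomness per rebuild'' suggestion does not supply it.
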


The above algorithm is near-optimal because its total update time nearly matches the state-of-the-art \emph{static} triangle detection algorithm with $O(n^{\omega})$ running time. It is also robust against an adaptive adversary and works for the search version, i.e., a triangle is maintained explicitly.

\Cref{thm:inc triangle,thm:dec triangle} together establish an 
update-time separation between the incremental and decremental versions of the a problem; as far as we know, this is the first
separation of this kind.

\paragraph{A Barrier for Our Incremental MIS Lower Bound.}
\Cref{thm:dec triangle} shows a barrier to improving our lower bound for dynamic MIS. This is because the proof of \Cref{thm:hardness MIS} is based on showing the hardness of the weaker \emph{decremental 3-maximal clique} problem, where the goal is to maintain a clique $S$ that is maximal \emph{or} has size $|S| \ge 3$ in a decremental graph. That is, the maximality property is not required when $|S| \ge 3$. However, via \Cref{thm:dec triangle}, we can solve the decremental 3-maximal clique problem in $\tilde{O}(n^{\omega})$ total update time.

\begin{corollary} \label{cor:3-mis}
There is a randomized algorithm against an adaptive adversary for solving the decremental 3-maximal clique problem with high probability in $\tilde{O}(n^{\omega})$ total update time.
\end{corollary}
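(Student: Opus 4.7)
The plan is to reduce directly to the decremental triangle detection algorithm of \Cref{thm:dec triangle}. First, run that algorithm on the input; it explicitly maintains a triangle $T = \{u, v, w\}$ of the current graph and reports NO once the graph is triangle-free, all in $\tilde{O}(n^{\omega})$ total update time, against an adaptive adversary. As long as a triangle exists, I simply output $S := T$. Since $|S| = 3$, this trivially satisfies the 3-maximality requirement, for which maximality is waived as soon as $|S| \ge 3$.

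The remaining task is to maintain a valid output after the graph becomes triangle-free; at that point the guarantee $|S| \ge 3$ is unreachable, so $S$ must actually be a maximal clique. The key observation is that in a triangle-free graph, every edge is a maximal clique (extending it to a third vertex would create a triangle), and if the graph has no edges left, then every vertex is a maximal clique. So for this phase it suffices to maintain an arbitrary surviving edge, falling back to any vertex if no edge remains.

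To implement the second phase cheaply, at the moment the graph first becomes triangle-free I enumerate all currently surviving edges into a doubly linked list $L$ together with pointers from each edge to its list node; this costs $O(n^2)$. On each subsequent edge deletion, I splice the corresponding node out of $L$ in $O(1)$ time, and output $S :=$ (head of $L$), or any vertex if $L$ is empty. This phase contributes $O(n^2) \le \tilde{O}(n^{\omega})$, giving a total of $\tilde{O}(n^{\omega})$. Robustness against an adaptive adversary is inherited from \Cref{thm:dec triangle}, since the second phase is deterministic given the update sequence. I do not anticipate any real obstacle: the corollary is essentially immediate from \Cref{thm:dec triangle} combined with the trivial structural fact about maximal cliques in triangle-free graphs.
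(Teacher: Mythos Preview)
Your proposal is correct and follows essentially the same approach as the paper: run the decremental triangle detection of \Cref{thm:dec triangle}, report the maintained triangle while one exists, then report any surviving edge (which is maximal in a triangle-free graph), and finally any vertex once the edge set is empty. The paper's proof is terser and omits the linked-list implementation detail, but the argument is the same.
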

\begin{proof}
Run the decremental triangle detection algorithm from \Cref{thm:dec triangle} on $G$. As long as there exists a triangle $T$ in $G$, report $T$. Then, as long as there exists an edge $e$, report $e$. Finally, once $G$ is empty, report any vertex. This is correct because once no triangle remains in $G$, every edge forms a \emph{maximal} clique. When $G$ becomes empty, any vertex is a maximal clique.
\end{proof}

This indicates that our lower bound approach in \Cref{thm:hardness MIS} is tight. To achieve a $\Omega(n^{2.1})$ lower bound when $\omega = 2$, one might consider showing the hardness of the dynamic 4-maximal clique problem, where maximality is not required when $|S| \ge 4$, for example. We leave this as an intriguing open problem.

\paragraph{Organization.} 
After the preliminaries in \Cref{sec:prelim}, we show the hardness of incremental MIS and decremental maximal clique in \Cref{sec:reduc}, proving \Cref{thm:hardness MIS,thm:hardness clique}. The incremental-decremental separation for triangle detection is presented in \Cref{sec:triangle}, proving \Cref{thm:inc triangle,thm:dec triangle}. Next, we give a simple reduction that implies a near-optimal non-robust decremental maximal clique algorithm in \Cref{sec:clique}, proving \Cref{thm:alg clique}. Finally, in \Cref{sec:open}, we conclude a list of interesting open problems generated by our results. We also survey prior separation results related to ours.
\section{Preliminaries}
\label{sec:prelim}
Throughout all graphs $G = (V, E)$ are undirected and unweighted. Let $G^c$ denote the \emph{complement} graph of $G$, i.e., the graph containing an edge $\{x, y\}$ if and only if $\{x, y\} \not\in E$. An \emph{independent set} is a set of vertices $I \subseteq V$ such that all pairs of distinct vertices $x, y \in I$ do not form an edge, $\{x, y\} \not\in E$. A \emph{maximal} independent set (MIS) is an independent set $I$ such that $I \cup \{x\}$ is not an independent set for all vertices $x \in V \setminus I$. In contrast, a \emph{clique} $K \subseteq V$ is a set of vertices such that all pairs of distinct vertices $x, y \in K$ form an edge, $\{x, y\} \in E$, and we define \emph{maximal} cliques analogously.

\begin{observation} \label{obs:mis-complement}
A set $I \subseteq V$ is a (maximal) independent set in $G$ if and only if $I$ is a (maximal) clique in the complement graph $G^c$.
\end{observation}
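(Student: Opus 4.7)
The plan is to prove both directions of the biconditional by unfolding definitions, treating the ``independent set vs.\ clique'' part and the ``maximality'' part separately.

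First I would observe that, by the very definition of the complement graph, a pair $\{x,y\}$ of distinct vertices satisfies $\{x,y\} \notin E(G)$ if and only if $\{x,y\} \in E(G^c)$. Consequently, $I \subseteq V$ is an independent set in $G$ (meaning every pair of distinct $x,y \in I$ has $\{x,y\} \notin E(G)$) if and only if every pair of distinct $x,y \in I$ has $\{x,y\} \in E(G^c)$, which is precisely the definition of $I$ being a clique in $G^c$. This handles the unqualified half of the equivalence.

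Next I would lift this to the maximal version. Assuming $I$ is an independent set in $G$ (equivalently, by the previous step, a clique in $G^c$), the set $I$ fails to be maximal in $G$ iff there exists $x \in V \setminus I$ such that $I \cup \{x\}$ is still an independent set in $G$. Applying the first step now to $I \cup \{x\}$ in place of $I$, the existence of such an $x$ is equivalent to the existence of $x \in V \setminus I$ such that $I \cup \{x\}$ is a clique in $G^c$, i.e., $I$ fails to be a maximal clique in $G^c$. Contraposing yields the maximality equivalence.

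No real obstacle is expected; the observation is essentially a tautology arising from the symmetry between edges and non-edges under complementation. The only subtlety is remembering to apply the first step to the augmented set $I \cup \{x\}$ rather than to $I$ itself when establishing the maximality part.
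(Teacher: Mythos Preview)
Your proof is correct; the paper states this observation without proof, treating it as immediate from the definitions, and your argument is exactly the straightforward unfolding one would give if asked to spell it out.
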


\paragraph{Fine-Grained Complexity Hypotheses.} \label{subsec:hypo}
In the following we summarize the fine-grained hardness assumptions used throughout. These are among the most popular assumptions in fine-grained complexity; see e.g.~\cite{Vassilevska19}. All hypotheses implicitly allow randomized algorithms.

\begin{hypothesis}[Combinatorial BMM]\label{conj:comb}
Every combinatorial algorithm for Boolean Matrix Multiplication requires at least $\Omega(n^{3-o(1)})$ time. 
\end{hypothesis}

\begin{hypothesis}[BMM]\label{conj:bmm}
Every algorithm for Boolean Matrix Multiplication requires at least $\Omega(n^{\omega-o(1)})$ time, where $\omega$ is the matrix multiplication exponent.  
\end{hypothesis}

\begin{hypothesis}[3SUM]\label{conj:3sum}
Every algorithm that decides if a size-$n$ set $A \subseteq \{0, \dots, n^C\}$ (for some constant $C$) contains elements $a, b, c \in A$ with $a + b = c$ requires at least $\Omega(n^{2-o(1)})$ time. 
\end{hypothesis}

\begin{hypothesis}[APSP]\label{conj:apsp}
Every algorithm that computes the pairwise distances in an $n$-vertex directed graph with edge weights $\{0, \dots, n^C\}$ (for some constant $C$) requires $\Omega(n^{3-o(1)})$ time.
\end{hypothesis}

\begin{hypothesis}[OMv]\label{conj:omv}
Every algorithm that preprocesses a matrix $M \in \{0, 1\}^{n \times n}$ and then one-by-one answers $n$ Boolean matrix-vector queries (given $v \in \{0, 1\}^n$, report \smash{$\bigvee_{j \in [n]} (M_{ij} \land v_j)$} for all $i \in [n]$) requires total time $\Omega(n^{3-o(1)})$.
\end{hypothesis}
\section{Hardness Reductions} \label{sec:reduc}

\paragraph{Decremental Maximal Clique.}
First, we show a simple reduction from triangle detection to decremental maximal clique. In fact, our reduction is based on the following \emph{all-edges} variant which is well-known to capture the BMM problem in dense graphs (see \Cref{obs:all-edges-triangle-bmm}) and is hard even in sparse graphs under both the 3SUM~\cite{Patrascu10} and APSP~\cite{WilliamsX20,ChanWX22} hypotheses (see \Cref{thm:all-edges-triangle-hardness}).

\begin{definition}[All-Edges Triangle Detection]
Given a tripartite graph $G = (X, Y, Z, E)$, decide for each edge $(x, z) \in (X \times Z) \cap E$ if it is involved in a triangle $(x, y, z) \in X \times Y \times Z$.
\end{definition}

\begin{observation}[E.g.~\cite{WilliamsW18}] \label{obs:all-edges-triangle-bmm}
For any constant $\epsilon > 0$, there is no $O(n^{\omega-\epsilon})$-time algorithm for the All-Edges Triangle Detection problem unless the BMM Hypothesis fails. Similarly, there is no $O(n^{3-\epsilon})$-time \emph{combinatorial} algorithm unless the \emph{Combinatorial} BMM Hypothesis fails.
\end{observation}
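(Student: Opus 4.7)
The plan is to prove this via a direct, textbook reduction from Boolean Matrix Multiplication to All-Edges Triangle Detection.

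First I would describe the reduction. Given two $n \times n$ Boolean matrices $A$ and $B$, construct a tripartite graph $G = (X, Y, Z, E)$ with $|X| = |Y| = |Z| = n$ as follows: place an edge $(x, y)$ in $E$ whenever $A_{xy} = 1$, place an edge $(y, z)$ in $E$ whenever $B_{yz} = 1$, and include \emph{every} pair $(x, z) \in X \times Z$ as an edge. The key observation is that an edge $(x, z)$ participates in a triangle $(x, y, z)$ precisely when some $y \in Y$ satisfies $A_{xy} = B_{yz} = 1$, i.e.\ precisely when $(AB)_{xz} = 1$. Hence running All-Edges Triangle Detection on $G$ and reading off the $n^2$ answers recovers the full Boolean product $AB$.

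Next I would handle the running-time bookkeeping: constructing $G$ takes $O(n^2)$ time, and the output of the triangle-detection routine is exactly $n^2$ bits encoding $AB$. Consequently, an $O(n^{\omega - \epsilon})$-time algorithm for All-Edges Triangle Detection would yield an $O(n^{\omega - \epsilon} + n^2) = O(n^{\omega - \epsilon})$-time algorithm for BMM, contradicting \Cref{conj:bmm}. Because the reduction performs no algebraic operations beyond table lookups, the identical argument applies with \Cref{conj:comb} in place of \Cref{conj:bmm} to rule out $O(n^{3-\epsilon})$-time \emph{combinatorial} algorithms.

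There is essentially no obstacle here; the one subtlety worth flagging is that the complexity parameter $n$ of All-Edges Triangle Detection must be interpreted as the number of vertices rather than the number of edges, so I should stress that $G$ has only $3n$ vertices even though $|E|$ can be $\Theta(n^2)$. Otherwise the vertex parameter would inflate and weaken the bound. Modulo this accounting, the observation is a one-line identification of triangles in the sandwich graph with nonzero entries of the matrix product.
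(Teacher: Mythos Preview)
Your reduction is correct and is exactly the standard textbook argument; the paper itself does not give a proof but simply cites the observation as known (e.g.\ \cite{WilliamsW18}), and what you wrote is precisely that folklore reduction.
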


\begin{theorem}[\cite{Patrascu10}, {{\cite[Corollary~1.4]{WilliamsX20}}}, \cite{ChanWX22}]\label{thm:all-edges-triangle-hardness}
For any constants $0 < \delta \leq \frac{1}{2}$ and $\epsilon > 0$, there is no $O((n \Delta^2)^{1-\epsilon})$-time algorithm for the All-Edges Triangle Detection problem in graphs with maximum degree $\Delta = O(n^\delta)$, unless both the 3SUM and APSP hypotheses fail.
\end{theorem}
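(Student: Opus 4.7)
The theorem is a conjunction of two independent conditional lower bounds---one under 3SUM and one under APSP---both ending at All-Edges Triangle on a sparse graph. I would handle them separately and, in each case, verify that the produced graph can be tuned so that $\Delta = O(n^\delta)$ across the full range $\delta \in (0, \frac{1}{2}]$.

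\paragraph{From 3SUM.}
Given a 3SUM instance $A \subseteq \{0,\ldots,N^C\}$ of size $N$, I would follow Patrascu's hashing construction. First pick an almost-linear hash family $h : \mathbb{Z} \to [R]$ so that, with high probability over $h$, (i) each bucket $A_i := \{a \in A : h(a) = i\}$ has size $\tilde O(N/R)$, and (ii) whenever $a+b=c$ in the integers, the residues satisfy $h(a)+h(b) \equiv h(c) + s \pmod{R}$ for some $s$ in a constant-size ``carry'' set. Then I would build a tripartite graph whose vertex classes consist of (bucket, shift) pairs and whose edges record which bucket-pairs are compatible via some $s$; a triangle in this graph encodes a candidate 3SUM triple, which can be verified or rejected in time proportional to bucket sizes. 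Setting $R \approx N^{2/(1+2\delta)}$ produces a graph with $n = \Theta(R)$ vertices and maximum degree $\Delta = \tilde O(n^\delta)$, whose product $n\Delta^2$ is $\tilde\Theta(N^2)$; any $O((n\Delta^2)^{1-\epsilon})$ algorithm for All-Edges Triangle would therefore give an $O(N^{2-\epsilon'})$ algorithm for 3SUM.

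\paragraph{From APSP.}
Here I would chain two known reductions. First, by Williams--Williams, APSP is sub-cubically equivalent to Negative Triangle Detection on $n$-vertex weighted graphs. Second, the constructions of Williams--Xu and Chan--Williams--Xu reduce Negative Triangle on a dense weighted graph to All-Edges Triangle on a sparse unweighted tripartite graph: each weighted edge is replaced by a bipartite gadget whose size is controlled by a bucketing of the weight range. Choosing the gadget granularity so that the output graph has $n$ vertices and maximum degree $\Delta = O(n^\delta)$ matches $n\Delta^2 \approx n_0^3$, where $n_0$ is the APSP instance size, so the conditional lower bound follows uniformly in $\delta$.

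\paragraph{Main obstacle.}
The sensitive step is the 3SUM reduction. The hash family $h$ must be simultaneously (a) near-linear over $\mathbb{Z}$, so that additive witnesses survive hashing, (b) well-balanced with high probability, and (c) structured enough that the number of spurious bucket-triples stays small enough for per-triangle verification not to dominate the claimed running time. Meeting all three requirements---and then balancing $R$ so that the trade-off $\Delta = O(n^\delta)$ is tight for every $\delta \in (0, \frac{1}{2}]$ in a unified reduction---is where the bulk of the technical work lies; the graph-building step itself is essentially bookkeeping, which is why the authors can simply import the result from the cited papers.
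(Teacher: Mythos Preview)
The paper does not prove this theorem at all: it is stated with attribution to \cite{Patrascu10}, \cite[Corollary~1.4]{WilliamsX20}, and \cite{ChanWX22} and used as a black box, with no accompanying argument. Your proposal is therefore not competing with any proof in the paper; you have (correctly) sketched the content of the cited reductions rather than anything the authors supply, and you yourself note in the last sentence that the authors simply import the result.

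As a sanity check on your sketch: the high-level plan is sound---P\u{a}tra\c{s}cu-style almost-linear hashing for the 3SUM side, and the APSP $\leftrightarrow$ (Zero/Negative) Triangle equivalence composed with the Vassilevska~Williams--Xu and Chan--Vassilevska~Williams--Xu sparsification for the APSP side---and your parameter balancing $n\Delta^2 \approx N^2$ (resp.\ $n_0^3$) is the right target. One small caution: the 3SUM-to-sparse-triangle chain is usually routed through Convolution-3SUM (or the Kopelowitz--Pettie--Porat refinement) rather than building the tripartite graph directly from bucket pairs, and getting the degree bound down to $n^\delta$ uniformly for every $\delta \le \tfrac12$ uses the specific trade-offs proved in \cite{WilliamsX20,ChanWX22} rather than just tuning $R$ in the original P\u{a}tra\c{s}cu construction. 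If you intend to actually write out a proof, you should cite and follow those refinements rather than re-deriving them from the bucket picture.
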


As mentioned in the introduction (Theorem \ref{thm:hardness clique}), our reduction relies on the mild assumption that the algorithm maintains a maximal clique for each connected component. Note that our algorithm from \cref{thm:alg clique} can do this.

\begin{lemma}
Suppose there is a decremental algorithm ${\cal A}$ for maintaining a maximal clique in each connected component of a graph with $n$ vertices and initial maximum degree $\Delta$ in $T(n, \Delta)$ total update time against an adaptive adversary. Then the All-Edges Triangle Detection problem on $n$-vertex graphs with maximum degree $\Delta$ can be solved in time $O(T(n, \Delta))$.
\end{lemma}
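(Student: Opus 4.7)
The plan is to construct an auxiliary graph $H$ on $O(n)$ vertices with maximum degree $O(\Delta)$, run the decremental maximal-clique algorithm $\mathcal{A}$ on $H$ while performing at most $O(n\Delta)$ adaptively chosen edge deletions, and read off the triangle status of each edge $(x,z) \in E \cap (X \times Z)$ from the maximal cliques $\mathcal{A}$ reports along the way.

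\textbf{Construction.} Let $V(H) = X \cup Y \cup Z \cup \{t_y : y \in Y\}$, and let $E(H)$ consist of the edges of $G$ together with, for each $y \in Y$, the star edges $(t_y, y)$ and $(t_y, v)$ for every $v \in N_X(y) \cup N_Z(y)$. A direct count gives $|V(H)| \le 2n$ and maximum degree $O(\Delta)$. A case analysis on the four parts $X, Y, Z, T$ shows that the only $4$-cliques in $H$ are of the form $\{x, y, z, t_y\}$, and such a $4$-clique exists iff $(x,y,z)$ is a triangle of $G$. Moreover, the $3$-cliques $\{x, y, z\}$ and $\{x, z, t_y\}$ always extend to this $4$-clique whenever they exist, so the maximal cliques of size $\geq 3$ in $H$ are either such $4$-cliques, or the ``dangling'' triangles $\{x, y, t_y\}$ and $\{y, z, t_y\}$, which certify that the edge $(x, y)$ or $(y, z)$ currently lies in no triangle of $H$.

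\textbf{Reduction.} Initialize $H$, run $\mathcal{A}$, and in a loop pick any component $C$ whose maintained maximal clique $M$ has size $\geq 2$ and react: if $|M| = 4$, write $M = \{x, y, z, t_y\}$, mark $(x, z)$ as ``in a triangle'' and delete $(x, z)$; if $M = \{x, y, t_y\}$ (respectively $\{y, z, t_y\}$), delete $(x, y)$ (respectively $(y, z)$); and if $|M| = 2$, delete the unique edge of $M$. Each iteration performs one deletion, so the loop terminates after at most $|E(H)| = O(n\Delta)$ iterations. At the end, declare every marked $(x,z) \in E \cap (X \times Z)$ to be in a triangle of $G$ and every unmarked one to be triangle-free. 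A dirty-component queue updated after each deletion keeps the bookkeeping overhead at $O(n\Delta)$.

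\textbf{Correctness and main obstacle.} The heart of the proof is the invariant, shown by induction on the deletion order, that every deletion is \emph{safe}: either we delete an $(X \times Z)$ edge we have just marked, or the deleted edge lies in no triangle of $G$ whose $(X \times Z)$-edge is still unmarked. The crux of the induction is the contrapositive: if $(x, y, z)$ is a triangle of $G$ with $(x, z)$ unmarked, then by the inductive hypothesis all three edges $(x, y), (y, z), (x, z)$ still lie in $H$, so the current component contains the $4$-clique $\{x, y, z, t_y\}$; this forbids $\mathcal{A}$ from ever reporting $\{x, y, t_y\}$, $\{y, z, t_y\}$, or $\{x, z\}$ as a maximal clique, since each extends to the $4$-clique. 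Hence no triangle of $G$ can be destroyed by an unsafe deletion, which gives that an $(X \times Z)$ edge is marked iff it is in a triangle of $G$. I expect this inductive safety argument (with its mutually supporting cases for deletions of $(x, y)$, $(y, z)$, and $(x, z)$) to be the main step requiring care. The efficiency bound then follows immediately: $\mathcal{A}$ processes the $O(n\Delta)$-long deletion sequence on a graph with $|V(H)| \le 2n$ vertices and maximum degree $O(\Delta)$ in total time $T(n, \Delta)$, and all bookkeeping adds only $O(n\Delta) \le O(T(n, \Delta))$ overhead.
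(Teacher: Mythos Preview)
Your construction is correct, but it is more elaborate than necessary. The paper runs $\mathcal{A}$ directly on the tripartite instance $G$ itself, with no auxiliary vertices: since $G$ is tripartite, every maximal clique has size $2$ or $3$. A size-$3$ maximal clique is literally a triangle $(x,y,z)$, so the reduction marks $(x,z)$ and deletes it; a size-$2$ maximal clique $\{u,v\}$ is, by maximality, an edge contained in no triangle of the current graph, so deleting it is safe. The inductive safety argument then becomes essentially a one-liner: no edge of an as-yet-undiscovered triangle can ever be reported as a maximal $2$-clique, because the third vertex would extend it.

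Your route adds the twin vertices $t_y$ to lift triangles of $G$ to $4$-cliques of $H$, which buys nothing and forces you to track the extra star edges through the induction. In particular, your sketch asserts that after arbitrary deletions the only size-$3$ maximal cliques are $\{x,y,t_y\}$ and $\{y,z,t_y\}$; but this requires showing that the star edges $(t_y,x),(t_y,y),(t_y,z)$ survive as long as $(x,z)$ is unmarked and $(x,y,z)$ is a triangle of $G$---otherwise $\{x,z,t_y\}$ or $\{x,y,z\}$ could become a maximal $3$-clique and your case split would be incomplete. This is true and provable within the same induction (each such star edge, if reported as a maximal $2$-clique, would be extendable by a vertex whose incident $G$-edges survive by the inductive hypothesis), but it is extra bookkeeping that the paper's direct approach avoids entirely.
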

\begin{proof}
Let $G = (X, Y, Z, E)$ denote the given All-Edges Triangle Detection instance, and initialize $G' \gets G$. We run ${\cal A}$ on $G'$ and will adaptively generate a sequence of edge deletions to $G'$ as follows. Take an arbitrary non-singleton connected component $C$, and let $K_{C}$ denote the maximal clique in $C$ maintained by ${\cal A}$. Note that $|K_C| \geq 2$ (since $C$ is not a singleton) and $|K_C| \leq 3$ (since $G$ is tripartite). This leaves two cases:
\begin{enumerate}
    \item If $|K_{C}| = 3$, then $K_{C}$ is a triangle of the form $(x, y, z) \in X \times Y \times Z$. We report YES for the edge $(x, z)$, remove the edge $(x, z)$ from the graph $G'$ and continue.
    \item If $|K_{C}| = 2$, then $K_C$ is an edge of the form $(x, y) \in X \times Y$, $(y, z) \in Y \times Z$ or $(x, z) \in X \times Z$. In either case we remove the edge and continue.
\end{enumerate}
When the graph $G'$ becomes empty we terminate (reporting NO for all edges $(x, z)$ for which we have not previously reported YES).

It is clear that we only report YES for edges $(x, z)$ that are actually involved in a triangle. Conversely, focus on any edge $(x, z)$ that is involved in a triangle. We argue that in each step the algorithm either detects a triangle $(x, y, z)$ (in which case it reports YES), or it deletes an edge that leaves all triangles involving $(x, z)$ intact. Indeed, if in case 1 the algorithm detects some other triangle~$(x', y', z')$ with $(x', z') \neq (x, z)$ then we remove an irrelevant edge $(x', z')$. And if in case 2 the algorithm removes some edge $(u,v)$, then by the maximality of $K_C = \{u,v\}$ that edge is not contained in any triangle. Since the graph is eventually empty, it follows that we have reported YES for $(x, z)$ at some point.

Finally, observe that the additional time we spent outside running ${\cal A}$ is $O(n \Delta)$. Therefore, the total running time is $O(n \Delta + T(n, \Delta)) = O(T(n, \Delta))$. (Here, we use the trivial bound $T(n, \Delta) = \Omega(n \Delta)$ as in the worst case $\mathcal A$ processes $\Omega(n \Delta)$ updates.)
\end{proof}

Combined with \Cref{thm:all-edges-triangle-hardness}, the above reduction immediately implies \Cref{thm:hardness clique}.

\begin{remark}
Suppose instead there is a \emph{fully dynamic} algorithm ${\cal A}'$ that simply maintains a maximal clique in $G$, without the additional requirement of doing so for every connected component. We can use ${\cal A}'$ to solve triangle detection using $O(n^{2})$ edge updates by the following similar reduction. Let $K$ be a maximal clique maintained by ${\cal A}'$. If $|K|\ge3$, then we find a triangle and terminate. If~\makebox{$|K|=2$}, say $K=\{u,v\}$, then $\{u,v\}$ is not in any triangle and we delete the edge $\{u,v\}$. Else, say~\makebox{$K=\{v\}$}, then $v$ is not in any triangle. In this case, we insert edges from $v$ to all other vertices $w$ so that any maximal clique must contain $v$. From now, we\emph{ mark} $v$ and ignore $v$. More specifically, the algorithm maintains the set of marked vertices $M$. If $|K\setminus M|\ge3$, we find a triangle and terminate. If $|K\setminus M|=2$, we delete the edge in $K\setminus M$. Else, $K\setminus M=\{v\}$, then we insert edges from $v$ to all other vertices, mark $v$. We continue until we find a triangle, or all vertices are marked, in which case no triangle is left.
\end{remark}

\paragraph{Incremental Maximal Independent Set.}
An independent set is a clique in the complement graph. Hence, we can adjust the above reduction to become a reduction for incremental MIS by working on the complement graph $G'$ instead. One technical issue is that, in our incremental graph, once there exists a vertex $v$ that is adjacent (in $G$) to every other vertex, then the algorithm can keep reporting $\{v\}$ as its MIS, which is not informative for the reduction. We adjust the reduction to prevent this scenario by making two copies of the complement graph. The complete reduction is described below.

\begin{lemma}
Suppose there is an incremental algorithm ${\cal A}$ for maintaining an MIS on a graph with $n$ vertices in $T(n)$ total update time against an adaptive adversary. Then, we can check if an $n$-vertex graph contains a triangle in $O(T(2n))$ time.
\end{lemma}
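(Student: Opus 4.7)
The plan is to adapt the decremental maximal clique reduction above by switching to the complement correspondence (\Cref{obs:mis-complement}), and by introducing two copies of the vertex set to avoid the degenerate singleton-MIS case flagged in the paragraph preceding the statement. Given a triangle detection instance $G$ on $n$ vertices, I would construct a graph $H$ on vertex set $V_1 \cup V_2$, where $V_1 = \{v^{(1)} : v \in V\}$ and $V_2 = \{v^{(2)} : v \in V\}$ are two disjoint copies of $V$. The graph $H$ is initialized (via $O(n^2)$ insertions fed to $\mathcal{A}$ at the start) to contain a copy of $G^c$ on each of $V_1, V_2$, together with a linking structure between the copies such as the perfect matching $\{\{v^{(1)},v^{(2)}\} : v \in V\}$, chosen precisely to guarantee that the MIS never collapses to a single vertex.

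I would view $H$ as implicitly encoding a decrementally shrinking ``working graph'' $G^* \subseteq G$, where $\{u,v\} \in E(G^*)$ iff neither $\{u^{(1)},v^{(1)}\}$ nor $\{u^{(2)},v^{(2)}\}$ has been inserted into $H$; initially $G^* = G$. Any MIS $S$ returned by $\mathcal{A}$ decomposes as $S = K_1 \cup K_2$ with $K_i \subseteq V_i$, and by the complement correspondence each $K_i$ corresponds to a maximal clique of $G^*$ restricted by the other copy. The reduction then proceeds exactly as in the preceding decremental lemma: if either $|K_i| \ge 3$ then $G^*$, and hence $G$, contains a triangle and we halt; if some $|K_i| = 2$, the corresponding edge is maximal in the restricted $G^*$ and we ``delete'' it from $G^*$ by inserting both its copies $\{u^{(1)},v^{(1)}\}$ and $\{u^{(2)},v^{(2)}\}$ into $H$. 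The linking structure ensures the singleton scenario $S = \{v^{(1)}\}$ cannot arise: $v^{(1)}$ always has its matching partner $v^{(2)}$ as a non-neighbor in $H$, so $S$ can always be extended to include a second copy's vertex.

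The correctness argument mirrors the previous lemma: the only edges ever inserted into $H$ are those of reported triangles (case 1) or those maximal in the current $G^*$ (case 2), which by maximality cannot lie in any triangle of $G^*$. Hence for any triangle $(x,y,z) \subseteq G$, all three of its edges persist in $G^*$ until the triangle is detected in some case-1 step. The process terminates when $G^*$ becomes edgeless. The total time is $O(n^2)$ for bookkeeping plus $T(2n)$ for $\mathcal{A}$ to handle all $O(n^2)$ insertions, yielding $O(T(2n))$ as claimed. The main obstacle is pinning down the exact linking structure between $V_1$ and $V_2$ and giving a clean structural characterization of the MIS in $H$ that rules out \emph{every} adversarial degeneracy — in particular, configurations in which both $K_1$ and $K_2$ would be singletons whose ``isolated'' vertices in $G^*$ leave us with no edge to insert; handling this case cleanly (perhaps by enriching the cross-edges beyond a bare matching, or by inserting additional ``killer'' edges to suppress used-up vertices) is the one delicate step of the plan.
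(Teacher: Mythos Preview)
Your high-level plan (pass to the complement, duplicate the vertex set) matches the paper, but the concrete construction you propose does not work, and the difficulty you flag at the end is exactly where the real content lies.

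The perfect-matching cross-edges actively break the argument you give for the case $|K_i|=2$. With any cross-edges present, $K_1$ is \emph{not} an MIS of $H[V_1]$ on its own: a vertex $y^{(1)}$ may be excluded from $K_1$ not because it is adjacent to $u^{(1)}$ or $v^{(1)}$, but because its matching partner $y^{(2)}$ lies in $K_2$. Concretely, take $G$ on $\{u,v,p,q\}$ with every edge except $\{v,q\}$; then $S=\{u^{(1)},v^{(1)},p^{(2)},q^{(2)}\}$ is a valid MIS of your $H$, yet both corresponding edges $\{u,v\}$ and $\{p,q\}$ lie in triangles of $G$. Your Case~2 would delete one of them and can thereby destroy all triangles without ever reporting one. (Your sentence ``its matching partner $v^{(2)}$ as a non-neighbor'' is also literally backwards --- the matching partner \emph{is} a neighbor; it is the other vertices of $V_2$ that are non-neighbors --- but that is a minor slip compared to the structural issue.)

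The paper avoids this by starting with \emph{no} cross-edges at all, so that initially $K_1$ and $K_2$ are genuine maximal cliques of $G$ independently. The singleton case $K_1=\{v_1\}$ is then handled by a \emph{marking} mechanism: such a $v_1$ is necessarily adjacent to every unmarked vertex in its copy, so $v$ is declared marked, and cross-edges are inserted only among marked vertices (forming a bi-clique between the marked sets in the two copies). Invariants guarantee that marked vertices are pairwise adjacent, hence at most one marked vertex sits in the MIS, and one can always focus on the side whose $K_i$ contains no marked vertex. This is precisely the ``killer-edge'' mechanism you gesture at in your final sentence, but getting the invariants right (which cross-edges to add, and why the Case~2 edge is then provably triangle-free in $G$) is the entire proof --- it is not a detail to be filled in later.
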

\begin{proof}
Initialize the graph $G'$ that contains two copies of the complement graph of $G$, denoted by~$G'_{1}$ and~$G'_{2}$. There is initially no edge between $V(G_{1}')$ and $V(G_{2}')$ in $G'$. For each vertex $u$ in $G$, let $u_{1},u_{2}$ denote the corresponding vertices in $G'_{1},G'_{2}$, respectively. 

We run ${\cal A}$ on $G'$ and will adaptively generate a sequence of edge insertions to $G'$ such that $G'_{1}$ and $G'_{2}$ are always identical. We will \emph{mark} vertices of $G$. Once $u$ is marked, then both $u_{1}$ and $u_{2}$ are marked. Initially, there is no marked vertex. The invariant is as follows:
\begin{enumerate}[label=(\Roman*)]
\item \label{enu:inv1}The edges between $V(G_{1}')$ and $V(G_{2}')$ form precisely the bi-clique between marked vertices in $V(G_{1}')$ and $V(G_{2}')$.
\item \label{enu:inv2}Once $u$ is marked, then $u_{i}$ is adjacent to every other vertex in $G_{i}'$ for $i\in\{1, 2\}$.
\item \label{enu:inv3}For $i\in\{1,2\}$, if $\{u_{i},v_{i}\}$ is inserted to $G'_{i}$ , then $\{u,v\}$ is an edge in $G$ but $\{u,v\}$ is not in any triangle of $G$.
\end{enumerate}
So, once $G'_{i}$ becomes a complete graph, then we can conclude that $G$ has no triangle. Also, once $u$ is marked, then $u$ is not in any triangle of $G$.

Let $K$ denote the MIS in $G'$ maintained by ${\cal A}$. By Invariant \ref{enu:inv1} and \ref{enu:inv2}, all marked vertices are adjacent to each other. So, there is at most one marked vertex in $K$. For $i\in\{1,2\}$, let $K_{i}=K\cap V(G_{i})$. Suppose without loss of generality that $K_{1}$ contains no marked vertices. There are three cases:

\begin{enumerate}
    \setlength\parskip{0pt plus 1pt minus 1pt}
    \setlength\parindent{1.6em}
    \item If $|K_{1}|=1$, say $K_{1}=\{v_{1}\}$, then we mark $v$ and add all edges from $v_1$ to all marked vertices in~$G'_2$ and from $v_2$ to all marked vertices in $G'_1$, in order to satisfy Invariant~\ref{enu:inv1}. Invariant~\ref{enu:inv2} still holds because we claim that every vertex $w_{1}\in V(G'_{1})$ is adjacent to $v_{1}$. Indeed, if $w_{1}$ is marked, then $w_{1}$ is adjacent to $v_{1}$ by Invariant~\ref{enu:inv2}. If $w_{1}$ is unmarked, $w_{1}$ must be adjacent to~$v_{1}$ too, otherwise $K\cup\{w_{1}\}$ is independent in $G'$ (here we are using the fact that an unmarked vertex in $G'_1$ has no edges to $G'_2$), which contradicts the maximality of $K$.
    
    \item If $|K_{1}|=2$, say $K_{1}=\{u_{1},v_{1}\}$, then we insert the edges $\{u_{i},v_{i}\}$ into $G'_{i}$ for $i\in\{1,2\}$. We will prove that Invariant~\ref{enu:inv3} still holds. Indeed, $\{u,v\}$ is an edge in $G$ because $\{u_{1},v_{1}\}$ was a non-edge in $G'_{1}$ before insertion. It remains to prove that $\{u,v\}$ is not in any triangle in $G$.

    Suppose for contradiction that $\{u,v,w\}$ is a triangle in $G$ for some vertex $w$. First, observe that $w_{1}$ is either adjacent to $u_{1}$ or $v_{1}$ in the current $G'_1$. Indeed, this holds when $w_{1}$ is marked by Invariant~\ref{enu:inv2}; if $w_{1}$ is unmarked and $w_{1}$ is not adjacent to both $u_{1}$ and $v_{1}$, then $K\cup\{w_{1}\}$ is independent, which contradicts the maximality of $K$. Suppose without loss of generality that~$w_{1}$ is adjacent to $u_{1}$ in $G'_1$. There are two subcases. First, if $\{u_{1},w_{1}\}$ was initially in~$G'_{1}$ before any insertion, then $\{u,w\}$ is not an edge in $G$. Second, if $\{u_{1},w_{1}\}$ was previously inserted into~$G'_{1}$, then Invariant~\ref{enu:inv3} says that $\{u,w\}$ is not in any triangle in $G$. Both cases contradict the assumption that $\{u,v,w\}$ is a triangle in $G$. 
    \item If $|K_{1}|\ge3$, then any three vertices in $K_{1}$ corresponds to a triangle in $G$. Thus, we can report a triangle and terminate. 
\end{enumerate}

We continue until we find a triangle or until $G'_{1}$ and $G'_2$ are the complete graph, implying that $G$ has no triangle. Observe that the additional time we spent outside running ${\cal A}$ is proportional to the total recourse of ${\cal A}$, which is at most the total update time $T(2n)$ plus the time to insert edges between $V(G'_{1})$ and $V(G'_{2})$. So, the total running time is $O(T(2n)+n^{2}) = O(T(2n))$ (using the trivial fact that $T(2n) = \Omega(n^2)$ as in the worst case $\mathcal A$ processes $\Omega(n^2)$ updates).
\end{proof}

The above reduction combined with \cref{obs:all-edges-triangle-bmm} immediately implies \Cref{thm:hardness MIS}.
\section{Dynamic Triangle Detection} \label{sec:triangle}
In this section we consider the dynamic \emph{triangle detection} problem. In \cref{sec:triangle-decr} we design an $\tilde O(n^\omega)$-time algorithm for \emph{decremental} triangle detection. In \cref{sec:triangle-incr} we show that in contrast \emph{incremental} triangle detection requires total time $n^{3-o(1)}$.

\subsection{Decremental Triangle Detection} \label{sec:triangle-decr}
\thmdectriangle*

The basic idea behind the algorithm is simple: We design an algorithm that proceeds in \emph{stages}. At the beginning of each stage we precompute a set of triangles in $G$---for instance by the following well-known fact:

\begin{lemma}[\cite{AlonGMN92}] \label{lem:triangle-witness}
There is a deterministic $\tilde O(n^\omega)$-time algorithm that, given a graph $G$, computes for each edge $e \in E(G)$ a triangle that contains $e$ or decides that no such triangle exists.
\end{lemma}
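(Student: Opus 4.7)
The plan is to reduce the problem to computing witnesses for Boolean matrix multiplication and then solve that in $\tilde O(n^\omega)$ time. Let $A\in\{0,1\}^{n\times n}$ be the adjacency matrix of $G$. First I would compute the integer product $A^2$ in $O(n^\omega)$ time; the entry $(A^2)_{ij}$ counts the common neighbours of $i$ and $j$, so for every edge $e=\{i,j\}\in E(G)$ we can instantly decide whether any triangle contains $e$ (namely, iff $(A^2)_{ij}\geq 1$). The remaining task is, for each such edge, to exhibit an explicit third vertex $k$ with $A_{ik}=A_{kj}=1$.

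For the witness extraction I would use the standard sampling scheme. For each scale $t=0,1,\dots,\lceil\log_2 n\rceil$, sample a subset $R_t\subseteq[n]$ of size roughly $n/2^t$, form the matrix $B^{(t)}$ obtained from $A$ by zeroing out columns outside $R_t$ and multiplying the $k$-th surviving column by the integer $k$, and compute $W^{(t)} := B^{(t)}\cdot A$ over the integers in $\tilde O(n^\omega)$ time (entries are bounded by $n^2$, so fast integer matrix multiplication gives only a $\polylog(n)$ overhead). Whenever a pair $(i,j)$ has exactly one common neighbour in $R_t$, the entry $W^{(t)}_{ij}$ equals that index $k$, and we can verify in constant time that $\{i,k\},\{k,j\}\in E$. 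Taking $t$ so that $n/2^t$ roughly matches the number of common neighbours of $i$ and $j$ isolates every triangle-containing edge with constant probability, and $O(\log n)$ independent repetitions per scale boost this to high probability. The total running time across all $O(\log^2 n)$ matrix products is $\tilde O(n^\omega)$.

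Finally, to match the \emph{deterministic} guarantee asserted in the lemma I would derandomise the sampling step. The cleanest route is the self-reducibility recursion of Alon-Galil-Margalit-Naor: after one round of the above procedure, focus only on the set $P$ of pairs that still lack a witness; these pairs are known to have $\geq 2$ witnesses each, so we recurse on them with a sampling scale one step larger. Using explicit $(n,k)$-splitters (or the covering designs from the AGMN paper) in place of random sets yields a deterministic choice of $R_t$ at each scale, and the recursion terminates after $O(\log n)$ levels. Each level costs $\tilde O(n^\omega)$ time, giving the claimed bound.

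The main obstacle is the derandomisation: the randomized witness scheme is essentially immediate, but producing a deterministic $\tilde O(n^\omega)$ procedure requires either invoking explicit small-size hitting families for subsets of $[n]$ of each prescribed size or carefully setting up the AGMN recursion so that the number of unwitnessed pairs shrinks geometrically. Everything else — reducing triangle witnesses to BMM with witnesses, handling the integer-bit-complexity overhead, and verifying candidate witnesses — is routine.
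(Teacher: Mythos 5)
Your proposal is correct and follows exactly the route behind the result the paper invokes: the paper gives no proof of \cref{lem:triangle-witness} at all, it simply cites Alon--Galil--Margalit--Naor, and your reduction to Boolean matrix multiplication with witnesses (index-weighted sampled column restrictions at logarithmically many scales, plus the known derandomization via the AGMN recursion/explicit sample spaces) is precisely that standard argument. The only soft spot is that the deterministic step is sketched rather than carried out, but you correctly identify it as the crux and point to the right known machinery, which is all the cited lemma requires.
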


We keep reporting arbitrary precomputed triangles as long as the adversary has not destroyed all precomputed triangles. At this point we compute the next set of triangles and proceed to the next stage. The main issue is that there could be edges that are involved in \emph{exceptionally many} precomputed triangles, and so the adversary could potentially destroy all the precomputed triangles by removing only few edges. We avoid this problem by computing a more specific \emph{balanced} set of triangles that avoids this problem.

\subsubsection{Computing a Balanced Set of Triangles}
We start with some definitions to quantify what \emph{exceptionally many} means. Let $T(G)$ denote the set of triangles in a graph~$G$. For an edge~\makebox{$e \in E(G)$}, we write $\tau_G(e) = |\{t \in T(G) : e \subseteq t\}|$ to denote the number of triangles in~$G$ containing $e$. For a triangle~\makebox{$t = \{x, y, z\} \in T(G)$}, we define the \emph{value}~$v_G(t)$ as
\begin{equation*}
    v_G(t) = \frac{1}{\tau_G(\{x, y\})} + \frac{1}{\tau_G(\{y, z\})} + \frac{1}{\tau_G(\{x, z\})},
\end{equation*}
and we define the \emph{value} $v_G(e)$ of an edge $e \in E(G)$ as
\begin{equation*}
    v_G(e) = \frac{1}{3} \cdot \! \sum_{e \subseteq t \in T(G)} v_G(t).
\end{equation*}
Intuitively, these values can be interpreted as follows: Suppose we initially assign one \emph{credit} to each edge $e$ that participates in a triangle. Then $e$ partitions its credit into $\tau_G(e)$ equi-valued parts~$1/\tau_G(e)$ and distributes one such part to each triangle that $e$ is involved in. The total credit received by a triangle $t$ in this way is exactly the value $v_G(t)$. Next, each triangle splits its value into three equi-valued parts and distributes these values to the three incident edges. The total value received by each edge $e$ is exactly its value $v_G(e)$.

We occasionally drop the subscript $G$ whenever it is clear from context. To design our decremental triangle detection algorithm, it turns out that we need to compute a set of triangles in which each edge $e$ appears at most (roughly) $v_G(e)$ times; see the following lemma.

\begin{lemma} \label{lem:triangle-listing-small-value}
There is a randomized $\tilde O(n^\omega)$-time algorithm that, given a graph $G$, computes a set of triangles $T \subseteq T(G)$ such that:
\begin{enumerate}
    \item Each edge $e \in E(G)$ that is in at least one triangle in $G$ appears in at least one triangle~in~$T$.
    \item Each edge $e \in E(G)$ appears in at most $\alpha \cdot v_G(e)$ triangles in $T$, for some $\alpha = (\log n)^{O(1)}$.
\end{enumerate}
\end{lemma}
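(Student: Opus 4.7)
The plan is to construct $T$ by sampling, for every edge $e \in E$, a multiset $S_e$ of $\alpha = \Theta(\log n)$ triangles containing $e$: if $\tau_G(e)\ge 1$ then each element of $S_e$ is drawn independently and uniformly at random from $\{t\in T(G): e\subseteq t\}$, and otherwise $S_e = \emptyset$. I then set $T = \bigcup_e S_e$. Condition~1 of the lemma is then trivially satisfied: any edge $e$ with $\tau_G(e)\ge 1$ appears in every triangle of the non-empty set $S_e \subseteq T$.

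For condition~2 I would analyze the expectation and then apply a Chernoff bound. A direct calculation shows
\[
\mathbb{E}\bigl[|\{t\in T: f\subseteq t\}|\bigr] \;=\; \alpha + \alpha\!\sum_{f\subseteq t}\!\Bigl(\tfrac{1}{\tau_G(e_1)}+\tfrac{1}{\tau_G(e_2)}\Bigr) \;=\; \alpha + \alpha\bigl(3v_G(f)-1\bigr) \;=\; 3\alpha\, v_G(f),
\]
where $e_1,e_2$ denote the two sides of the triangle $t$ other than $f$; the middle equality uses $1/\tau_G(e_1)+1/\tau_G(e_2) = v_G(t) - 1/\tau_G(f)$ and the definition of $v_G(f)$. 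The key numerical observation that makes Chernoff useful here is that $v_G(e) \ge 1/3$ whenever $\tau_G(e) \ge 1$, since the triangles containing $e$ alone already contribute $\tau_G(e)\cdot \tfrac{1}{\tau_G(e)}\cdot \tfrac{1}{3} = \tfrac{1}{3}$ to $v_G(e)$. Because the samples across different edges are independent and the multiplicity of $f$ in $T$ is a sum of independent Bernoullis, a Chernoff tail bound gives $|\{t\in T : f \subseteq t\}| = O(\alpha\, v_G(f))$ simultaneously for all edges $f$ with high probability, which is exactly condition~2 with $\alpha = (\log n)^{O(1)}$.

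The main obstacle, and the place where I expect to spend the real effort, is implementing the per-edge sampling within the budget $\tilde O(n^\omega)$: for every edge $\{u,v\} \in E$ one must efficiently produce $\alpha$ (approximately) uniform random elements of $N(u) \cap N(v)$. I would adapt standard random-witness techniques for Boolean matrix products, combining $O(\log n)$ invocations of \cref{lem:triangle-witness} on subgraphs induced by random vertex subsets $V_i \subseteq V$ of geometrically decreasing sizes $n, n/2, n/4, \dots, 1$. For each edge $\{u,v\}$, the level at which $|V_i|$ matches $n/\tau_G(\{u,v\})$ yields a near-uniform random common neighbor, and $O(\log n)$ independent repetitions produce $\alpha$ samples per edge. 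Each invocation of \cref{lem:triangle-witness} costs $\tilde O(n^\omega)$, so the whole scheme runs in $\tilde O(n^\omega)$ time; the care is needed precisely because different edges have very different values of $\tau_G(\{u,v\})$, so no single subset size serves all edges uniformly at once.
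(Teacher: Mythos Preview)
Your high-level plan---sample for every edge $e$ a few uniformly random triangles containing $e$, take the union, and bound the load on each edge $f$ by Chernoff---is precisely the intuition the paper itself recommends; it even invites the reader to imagine that $T$ is produced by exactly this process and says the actual proof is more involved only for the sake of the running time. Your expectation computation and the observation $v_G(f)\ge \tfrac{1}{3}$ are correct.

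The gap is in the implementation, and it is the very pitfall the paper flags in a footnote right before the proof. Your Chernoff step requires the samples in $\bigcup_e S_e$ to be \emph{independent across edges $e$}. But the scheme you sketch---running the witness algorithm of \cref{lem:triangle-witness} on subgraphs induced by \emph{shared} random vertex sets $V_i$ and reading off one witness per edge---produces correlated samples: the triangle returned for $e$ and the one returned for $e'$ both depend on the same $V_i$, so the indicators ``sample for $e$ contains $f$'' are not independent and Chernoff does not apply. (A secondary issue: with only $O(\log n)$ repetitions per level you do not even place both endpoints of a target edge inside the sampled set with high probability; the paper uses $O(2^{2i}\log n)$ repetitions at level $i$, which still sums to $\tilde O(n^\omega)$ since the subgraphs have $O(n/2^i)$ vertices.) The paper does not try to manufacture independence. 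Instead it drops the two-step plan and analyzes the random-subgraph output \emph{directly}: for a fixed edge $f=\{x,y\}$, at level $i$ only $O(\log n)$ repetitions are ``relevant'' (have $x,y\in V'$), and in each relevant repetition the number of third vertices $z\in V'$ forming a triangle with some edge in $E_i$ is concentrated around $\tau_i/2^i$ by Chernoff applied to the vertex subsampling itself. Summing over levels yields the $O(v_G(f)\,(\log n)^{O(1)})$ bound without ever invoking independence across edges.
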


We recommend skipping the following proof of \cref{lem:triangle-listing-small-value} for now, and to instead imagine that the edges $T$ are selected by the following simple randomized process: For each edge $e$ we sample uniformly and independently a random triangle that contains $e$. The actual proof of \cref{lem:triangle-listing-small-value} is more complicated only to obtain the fast $\tilde O(n^\omega)$ running time.\footnote{Familiar readers might object here, noting that by a standard isolation trick one can easily compute in time~$\tilde O(n^\omega)$, for each edge, a uniformly random triangle. The problem with this standard approach is that these random triangles are \emph{not} independent across different edges, and we thus lack concentration.}

\begin{proof}
By \cref{lem:triangle-witness} we first identify and remove all edges that do not participate in triangles in $G$. Let $E_i = \{e \in E : 2^i \leq \tau(e) < 2^{i+1}\}$ denote the subset of edges that appear in roughly $2^i$ triangles; we can compute the sets $E_i$ in time $O(n^\omega)$ by one matrix multiplication.

The algorithm will proceed in iterations $i \gets 0, \dots, \log n$, where in the $i$-th iteration we will select the triangles for all edges in $E_i$. In the $i$-th iteration we repeat the following steps $O(2^{2i} \log n)$ times: Sample a subset of nodes $V' \subseteq V$ with rate $2^{-i}$. Then apply \cref{lem:triangle-witness} on the induced subgraph~$G[V']$, and for each edge $e \in E_i$ for which we have found a triangle in $G[V']$, we include that triangle into our final list $T$.

\medskip\noindent
\emph{Correctness of Property 1.}
Focus on an arbitrary edge $e = \{x, y\}$. Let $i$ be such that $e \in E_i$, and let~$\{x, y, z_1\}, \dots, \{x, y, z_{2^i}\}$ denote some $2^i$ distinct triangles that $e$ is involved in. We argue that in the $i$-th iteration we find a triangle involving $e$ (with high probability). Indeed, in each repetition the probability that we include $e$ into $V'$ is exactly $2^{-2i}$. Independently, the probability that we include some $z_1, \dots, z_{2^i}$ into $V'$ is $\Omega(1)$. Thus, with probability $\Omega(2^{-2i})$ we include some triangle~$\{x, y, z_j\}$. Since we repeat the process $O(2^{2i} \log n)$ times, with high probability we will indeed find one triangle that involves $e$.

\medskip\noindent
\emph{Correctness of Property 2.}
Fix some edge $e = \{x, y\}$, and focus on some iteration $i$. By Chernoff's bound, with high probability there are only $O(\log n)$ repetitions with $x, y \in V'$; we refer to these as the \emph{relevant} repetitions. Let $\{x, y, z_1\}, \dots, \{x, y, z_{\tau_i}\}$ denote the set of triangles that $e$ is involved in such that~$\{x, y\} \in E_i$ or~$\{x, z_j\} \in E_i$ or $\{y, z_j\} \in E_i$ (these are the only triangles that can potentially be reported in the $i$-th iteration). In each relevant repetition we include each vertex $z_j$ into $V'$ with probability $2^{-i}$, and these events are independent of each other. Thus, in each relevant repetition the expected number of vertices $z_j$ contained in $V'$ is $\tau_i / 2^i$. By Chernoff's bound it follows that with high probability we include at most $O((1 + \tau_i / 2^i) \log n)$ vertices $z_j$ into $V'$ across all relevant repetitions. In this case we clearly report at most $O((1 + \tau_i / 2^i) \log^2 n)$ triangles involving $e$. Summing over all iterations, the total number of reported triangles involving $e$ is indeed
\begin{align*}
    &\sum_{i=1}^{\log n} O\left(\left(\frac{\tau_i}{2^i} + 1\right) \log n\right) \\
    &\qquad= \sum_{e \subseteq \{x, y, z\} \in T(G)} O\left(\left(\frac{1}{\tau(\{x, y\})} + \frac{1}{\tau(\{y, z\})} + \frac{1}{\tau(\{x, z\})}\right) \log n\right) + O(\log^2 n) \\
    &\qquad= \sum_{e \subseteq t \in T(G)} O\left(v(t) \log n\right) + O(\log^2 n) \\
    &\qquad = O(v(e) \log^2 n).
\end{align*}

\medskip\noindent
\emph{Running Time.}
In the $i$-th iteration we invoke \cref{lem:triangle-witness} $\tilde O(2^{2i})$ times on an $O(n / 2^{i})$-node graph, which takes time $\tilde O(2^{2i} \cdot (n / 2^i)^\omega) = \tilde O(n^\omega)$. Thus, also across all $\log n$ iterations the time is $\tilde O(n^\omega)$.
\end{proof}

\subsubsection{The Algorithm}
Let $G_0 \gets G$ denote the initial (complete) graph. The algorithm runs in a polylogarithmic number of \emph{stages}, where we denote by $G_0, G_1, \dots$ the graph at the beginning of the respective stages. In each stage $i$ we follow these steps:
\begin{enumerate}
    \item We call \cref{lem:triangle-listing-small-value} to compute a balanced set of triangles $T_i \subseteq T(G_i)$. We select an arbitrary triangle $t^* \in T_i$ to be the \emph{active} triangle and report $t^*$.
    \item We then process some edge deletions. Upon deletion of an edge $e$, we remove all triangles~\makebox{$t \in T_i$} that contain $e$. If the active triangle $t^*$ contains $e$, we select and report an arbitrary new active triangle from $T_i$. When the set $T_i$ becomes empty (and we thus cannot select a new active triangle), we instead move on to the next step 3.
    \item Call \cref{lem:triangle-witness} to identify all edges that are not involved in triangles, and remove these triangles from the graph. (I.e., we internally mark these edges as removed and if in the future the adversary removes one of these edges the update becomes a no-op.) Let $G_{i+1}$ denote the resulting graph. If $G_{i+1}$ is empty then we can terminate the algorithm and report that no triangle is left. Otherwise we have completed the $i$-th stage, and we proceed to the $(i+1)$-st stage.
\end{enumerate}

\subsubsection{Analysis}
The correctness of this algorithm is immediate: The algorithm keeps reporting triangles until at some point there are no triangles left and we terminate (irrespective of the adversary). The interesting part is to bound the running time by showing that the algorithm indeed terminates after a polylogarithmic number of stages; see the following \cref{lem:triangle-value-total,lem:triangle-value-removed,lem:triangle-stages}.

\begin{lemma} \label{lem:triangle-value-total}
Let $G$ be a graph in which every edge is part of at least one triangle. Then its total value is $\sum_{e \in E(G)} v(e) = \sum_{t \in T(G)} v(t) = |E(G)|$.
\end{lemma}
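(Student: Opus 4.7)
The statement is a pair of double-counting identities, and I would prove each by swapping the order of summation. The key observation is that the definitions of $v(t)$ and $v(e)$ are set up precisely so that ``credit'' is conserved at each redistribution step, matching the intuition given before \cref{lem:triangle-listing-small-value}.

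First I would handle $\sum_{t \in T(G)} v(t) = |E(G)|$. Unfolding the definition,
\begin{equation*}
    \sum_{t \in T(G)} v(t) = \sum_{t \in T(G)} \sum_{e \subseteq t} \frac{1}{\tau(e)} = \sum_{e \in E(G)} \sum_{\substack{t \in T(G) \\ e \subseteq t}} \frac{1}{\tau(e)} = \sum_{e \in E(G)} \frac{\tau(e)}{\tau(e)} = |E(G)|,
\end{equation*}
where the last step uses the hypothesis that every edge is part of at least one triangle, so $\tau(e) \geq 1$ and the ratio is well-defined and equal to $1$.

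Then I would handle $\sum_{e \in E(G)} v(e) = \sum_{t \in T(G)} v(t)$ by the analogous swap:
\begin{equation*}
    \sum_{e \in E(G)} v(e) = \frac{1}{3} \sum_{e \in E(G)} \sum_{\substack{t \in T(G) \\ e \subseteq t}} v(t) = \frac{1}{3} \sum_{t \in T(G)} \sum_{e \subseteq t} v(t) = \frac{1}{3} \sum_{t \in T(G)} 3 v(t) = \sum_{t \in T(G)} v(t),
\end{equation*}
using that each triangle $t$ contains exactly three edges.

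There is no real obstacle here; the only thing to be careful about is that $\tau(e) \geq 1$ for every edge, which is guaranteed by the hypothesis and is exactly what allows us to cancel $\tau(e)/\tau(e) = 1$ in the first identity. Without this hypothesis the first identity would instead read $\sum_t v(t) = |\{e : \tau(e) \geq 1\}|$.
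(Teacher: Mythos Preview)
Your proof is correct and matches the paper's approach exactly: both identities are proved by swapping the order of summation, using that each triangle has three edges for the first and that $\tau(e)\ge 1$ for every edge for the second. The only cosmetic difference is that the paper chains the two identities into a single display rather than treating them separately.
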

\begin{proof}
\begin{align*}
    \sum_{e \in E} v(e)
    &= \sum_{e \in E} \,\frac{1}{3} \cdot\! \sum_{e \subseteq t \in T(G)} v(t) \\
    &= \sum_{t \in T(G)} v(t) \\
    &= \sum_{\{x, y, z\} \in T(G)} \left(\frac{1}{\tau(\{x, y\})} + \frac{1}{\tau(\{y, z\})} + \frac{1}{\tau(\{x, z\})}\right) \\
    &= \sum_{\{x, y\} \in E} \frac{\tau(\{x, y\})}{\tau(\{x, y\})} \\
    &= |E|. \qedhere
\end{align*}
\end{proof}

\begin{lemma} \label{lem:triangle-value-removed}
Let $R_i$ be the set of edges removed from $G_i$ to obtain $G_{i+1}$. Then $\sum_{e \in R_i} v_{G_i}(e) \geq \frac{|E(G_i)|}{\alpha}$.
\end{lemma}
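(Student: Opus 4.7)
The plan is to interpret the edges deleted by the adversary during stage $i$ as a \emph{hitting set} for the family $T_i$ of precomputed triangles, and to use property 2 of \cref{lem:triangle-listing-small-value} to relate the size of such a hitting set to the total value removed. A lower bound on $|T_i|$ coming from property 1 will close the loop.

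First, I would observe that every edge in $G_i$ is contained in at least one triangle of $G_i$: for $i \geq 1$ this is enforced by the cleanup in step 3 of the previous stage, and for $i = 0$ we may assume it after a trivial preprocessing using \cref{lem:triangle-witness}. Consequently, property 1 of \cref{lem:triangle-listing-small-value} guarantees that every edge of $G_i$ appears in at least one triangle of $T_i$, so counting edge--triangle incidences gives $3|T_i| \geq |E(G_i)|$, hence $|T_i| \geq |E(G_i)|/3$.

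Next, let $D_i \subseteq R_i$ denote the edges that the adversary actually deletes during stage $i$. The stage terminates precisely when $T_i$ becomes empty, and the only mechanism that shrinks $T_i$ is the rule ``upon deletion of an edge $e$, remove all triangles in $T_i$ containing $e$'' (the cleanup of step 3 runs strictly after $T_i$ empties and therefore plays no role here). Thus every triangle in $T_i$ contains at least one edge of $D_i$, i.e., $D_i$ hits $T_i$. By property 2 of \cref{lem:triangle-listing-small-value}, each edge $e \in D_i$ lies in at most $\alpha \cdot v_{G_i}(e)$ triangles of $T_i$, so
\begin{equation*}
    \alpha \sum_{e \in D_i} v_{G_i}(e) \;\geq\; \sum_{e \in D_i} |\{t \in T_i : e \subseteq t\}| \;\geq\; |T_i|.
\end{equation*}
Combining this with $|T_i| \geq |E(G_i)|/3$ and $D_i \subseteq R_i$ gives $\sum_{e \in R_i} v_{G_i}(e) \geq |E(G_i)|/(3\alpha)$, and the factor of $3$ can be absorbed into the polylogarithmic quantity $\alpha$.

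There is no real obstacle; the only subtlety is justifying that the stage can only end via adversarial deletions (so that $D_i$ is indeed a hitting set for $T_i$), which is a direct consequence of the order of operations in the algorithm. If one wants to avoid rebranding $\alpha$ to swallow the factor $3$, it suffices to apply \cref{lem:triangle-listing-small-value} with the parameter $\alpha$ replaced by $\alpha/3$, which remains $(\log n)^{O(1)}$.
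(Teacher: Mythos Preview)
Your argument is correct and follows essentially the same approach as the paper's proof: interpret the adversarially deleted edges as a hitting set for $T_i$, bound each edge's contribution via property~2, and lower-bound $|T_i|$ via property~1. You are in fact a bit more careful than the paper---you explicitly separate the adversarial deletions $D_i$ from the step-3 cleanup, and you track the factor of $3$ in $|T_i| \geq |E(G_i)|/3$ (the paper writes ``at least $|E(G_i)|$ triangles'' but then absorbs constants into $\alpha$ anyway).
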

\begin{proof}
This is by design: Each edge $e$ appears in at most $\alpha \cdot v(e)$ of the selected triangles $T_i$. Thus, the adversary is forced to remove edges of total value at least~\smash{$\frac{|E(G_i)|}{\alpha}$} to destroy the at least $|E(G_i)|$ triangles in $T_i$.
\end{proof}

\begin{lemma} \label{lem:triangle-stages}
The algorithm terminates after at most $O(\alpha^2 \log^2 n)$ stages.
\end{lemma}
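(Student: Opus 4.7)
My plan is to take $V_i = |E(G_i)|$ as the potential and show it decays geometrically at a polylogarithmic rate per stage. By \cref{lem:triangle-value-total}, since step~3 of each stage removes all non-triangle edges, every edge of $G_i$ lies in a triangle, and hence $V_i$ coincides with the total value $\sum_e v_{G_i}(e)$. Combined with \cref{lem:triangle-value-removed}, this gives that stage $i$ removes a set $R_i$ with $\sum_{e \in R_i} v_{G_i}(e) \ge V_i/\alpha$, while $V_{i+1} = V_i - |R_i|$.

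The main obstacle is that \cref{lem:triangle-value-removed} bounds the total \emph{value} of $R_i$ rather than its cardinality: because $v_{G_i}(e)$ can be as large as $\Theta(n)$, a handful of high-value edges could account for almost all of the required value while shrinking $V_i$ by only a tiny amount. To convert the value bound into an edge-count bound, I would partition $E(G_i)$ into $O(\log n)$ geometric value buckets $B_j = \{e : v_{G_i}(e) \in [2^j, 2^{j+1})\}$. By pigeonhole, some $B_{j^*}$ contributes $\Omega(V_i/(\alpha \log n))$ of the removed value, so $|R_i \cap B_{j^*}| = \Omega(V_i/(\alpha \log n \cdot 2^{j^*+1}))$. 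When $2^{j^*}$ is at most a polylogarithmic multiple of $\alpha$, this already yields $|R_i| = \Omega(V_i/(\alpha^2 \log^2 n))$ and hence the desired geometric decay of $V_i$ at rate $\Omega(1/(\alpha^2 \log^2 n))$.

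The hard part is the complementary case, where the removed value is concentrated in rare, high-value edges. Here I would use the Markov-type bound implicit in \cref{lem:triangle-value-total}: since $\sum_e v_{G_i}(e) = V_i$, at any stage at most $V_i/2^{j^*}$ edges can have value $\geq 2^{j^*}$. The subtlety is that $v_{G_i}(e)$ is not monotone in $i$, because deleting edges can \emph{increase} the value of surviving edges through the $1/\tau$ factors, so a per-stage counting argument is insufficient. My plan is to charge globally across the $O(\log n)$ ``halving phases'' in which $V_i$ drops from some $V^*$ to $V^*/2$: in each such phase only $V^*/2$ edges are ever removed, so for each bucket $j$ the total value removed from that bucket throughout the phase is bounded by $2^{j+1} \cdot V^*/2^{j} = O(V^*)$, which in turn caps the number of high-value-dominated stages in that phase at $\widetilde O(\alpha)$ by \cref{lem:triangle-value-removed}. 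Combining the $\widetilde O(\alpha)$ low- and high-value-dominated stages over the $O(\log n)$ phases and $O(\log n)$ buckets yields the claimed $O(\alpha^2 \log^2 n)$ bound on the total number of stages.
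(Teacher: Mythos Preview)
Your overall plan—to show that $V_i := |E(G_i)|$ contracts geometrically—is the right kind of potential argument, and your low-bucket case is essentially fine (up to a harmless extra $\log n$). The genuine gap is in the high-bucket case. You assert that within a halving phase the total value removed from bucket~$j$ is at most $2^{j+1}\cdot V^*/2^j = O(V^*)$, citing that ``only $V^*/2$ edges are ever removed.'' But $V^*/2$ edges of value at most $2^{j+1}$ contribute total value at most $2^{j+1}\cdot V^*/2 = 2^{j}V^*$, not $2^{j+1}\cdot V^*/2^j$; to get the factor $V^*/2^j$ you would need a bound on the \emph{number of bucket-$j$ edges removed over the whole phase}. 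The Markov bound $|B_j^{(i)}|\le V_i/2^j$ holds only at a single stage, and—as you yourself observe—$v_{G_i}(e)$ is not monotone in $i$, so edges can migrate into bucket~$j$ from stage to stage. Nothing you have established prevents the cumulative count of bucket-$j$ removals over a phase from being $\Theta(V^*)$ rather than $O(V^*/2^j)$. With only the bound $2^{j}V^*$, the per-bucket budget is polynomial in $n$ for large~$j$ and yields no useful cap on the number of high-$j$ stages.

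The paper sidesteps this non-monotonicity trap by switching from the edge values $v_{G_i}(e)$ to the triangle counts $\tau_{G_i}(e)$, which \emph{are} monotone non-increasing in~$i$. The key identity is
\[
\sum_{t\in T(G_{i+1})} v_{G_i}(t)\;=\;\sum_{e\in E(G_{i+1})} \frac{\tau_{G_{i+1}}(e)}{\tau_{G_i}(e)},
\]
which, combined with \cref{lem:triangle-value-total,lem:triangle-value-removed}, shows that whenever $|E|$ fails to shrink by a $(1-\Omega(1/\alpha))$ factor, the \emph{average} ratio $\tau_{G_{i+1}}(e)/\tau_{G_i}(e)$ over surviving edges is at most $1-\Omega(1/\alpha)$; Markov then gives an $\Omega(1/\alpha)$-fraction of edges whose $\tau$ drops by a $(1-\Omega(1/\alpha))$ factor. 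Since each edge's $\tau$ lies in $[1,n]$ and only decreases, it can suffer at most $O(\alpha\log n)$ such drops in total, and a halving-phase count then yields the $O(\alpha^2\log^2 n)$ bound. The step your argument is missing is precisely this passage from the non-monotone value $v$ to the monotone proxy $\tau$.
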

\begin{proof}
Focus on any stage $i$. We show that transitioning to the $(i+1)$-st stage we make progress in at least one of two progress measures: The number of edges (specifically, $|E(G_{i+1})| \leq (1 - \Omega(\frac{1}{\alpha})) |E(G_i)|$), or the number of triangles (specifically, for at least an $\Omega(\frac{1}{\alpha})$-fraction of edges $e \in E(G_{i+1})$ we have that $\tau_{G_{i+1}}(e) \leq (1 - \Omega(\frac{1}{\alpha})) \tau_{G_i}(e)$). The former event can clearly happen at most $O(\alpha \log n)$ times before the graph becomes empty. The latter event similarly can happen at most $O(\alpha^2 \log^2 n)$ times before the graph contains no more triangles (and is thus empty by construction). 

To see this, suppose that the number of edges does not decrease much, $|E(G_{i+1})| > (1 - \frac{1}{4\alpha}) |E(G_i)|$, say. Observe that each triangle $t$ that appears in $G_i$ but not in $G_{i+1}$ contains at least one edge (and at most three edges) from $R_i$. Therefore, and applying the previous \cref{lem:triangle-value-total,lem:triangle-value-removed}, it follows that:
\begin{align*}
    \qquad\qquad\sum_{t \in T(G_{i+1})} v_{G_i}(t)
    &= \sum_{t \in T(G_i)} v_{G_i}(t) - \sum_{t \in T(G_i) \setminus T(G_{i+1})} v_{G_i}(t) \\
    &\leq \sum_{t \in T(G_i)} v_{G_i}(t) - \frac{1}{3} \cdot \sum_{e \in R_i} \sum_{e \subseteq t \in T(G_i)} v_{G_i}(t) \\
    &= \sum_{t \in T(G_i)} v_{G_i}(t) - \sum_{e \in R_i} v_{G_i}(e) \\
    &\leq \left(1 - \frac{1}{\alpha}\right)|E(G_i)| \\
    &\leq \left(1 - \frac{1}{2\alpha}\right)|E(G_{i+1})|.
\intertext{On the other hand, we have that:}
    \sum_{t \in T(G_{i+1})} v_{G_i}(t)
    &= \sum_{t = \{x, y, z\} \in T(G_{i+1})} \left(\frac{1}{\tau_{G_i}(\{x, y\})} + \frac{1}{\tau_{G_i}(\{y, z\})} + \frac{1}{\tau_{G_i}(\{x, z\})}\right) \\
    &= \sum_{e \in E(G_{i+1})} \frac{\tau_{G_{i+1}}(e)}{\tau_{G_i}(e)}.
\end{align*}
Putting both calculations together, it follows that for a random edge $e \in E(G_{i+1})$ we expect that the ratio~$\tau_{G_{i+1}}(e) / \tau_{G_i}(e)$ is at most $1 - \frac{1}{2\alpha}$. From Markov's inequality it follows that, for a random edge $e \in E(G_{i+1})$, with probability at least~\smash{$1 - \frac{1}{1 + 1/4\alpha} = \Omega(\frac{1}{\alpha})$} we have
\begin{equation*}
    \frac{\tau_{G_{i+1}}(e)}{\tau_{G_i}(e)} \leq \left(1 - \frac{1}{2\alpha}\right) \left(1 + \frac{1}{4\alpha}\right) \leq 1 - \frac{1}{4\alpha}.
\end{equation*}
In other words, for an $\Omega(\frac{1}{\alpha})$-fraction of edges~\makebox{$e \in E(G_{i+1})$} the number of triangles that $e$ is involved in indeed increases by $\tau_{G_{i+1}}(e) \leq (1 - \Omega(\frac{1}{\alpha})) \tau_{G_i}(e)$. This completes the proof.
\end{proof}

\begin{proof}[Proof of \cref{thm:dec triangle}]
In light of the previous \cref{lem:triangle-stages} it easily follows that our decremental triangle detection algorithm runs in time $\tilde O(n^\omega)$: Each stage takes time $\tilde O(n^\omega)$ to run \cref{lem:triangle-listing-small-value,lem:triangle-witness}. To efficiently process the edge deletions, we additionally store for each edge $e$ a list of pointers to the triangles in $T_i$ that $e$ is involved in. This list can easily be prepared in time $O(|T_i|)$ at the beginning of each stage. When an edge $e$ is deleted we traverse this list and remove all affected triangles from $T_i$. It follows that the total time to deal with the edge deletions is $O(n^2 + |T_i|) = \tilde O(n^2)$, where the upper bound $|T_i| = \tilde O(n^2)$ follows from \cref{lem:triangle-value-total}. As the total number of stages is $O(\alpha^2 \log^2 n) = (\log n)^{O(1)}$, this completes the proof.
\end{proof}

\subsection{Incremental Triangle Detection} \label{sec:triangle-incr}
We complement our \emph{decremental} triangle detection algorithm with a conditional lower bound for \emph{incremental} triangle detection ruling out polynomial improvements over the brute-force $O(n^3)$ algorithm.

\thminctriangle*

We note that for the incremental triangle detection problem it does not matter whether the adversary is oblivious or adaptive since the output is fixed (the output is NO until the first triangle is introduced, and YES afterwards). To prove \cref{thm:inc triangle} we rely on the following equivalent variant of the OMv hypothesis~\cite{HenzingerKNS15}:

\begin{hypothesis}[OuMv]
Every algorithm that preprocesses a given bipartite graph $G = (X, Y, E)$ and then answers $n$ online queries of the form \emph{``given $X' \subseteq X, Y' \subseteq Y$, is there an edge in the induced subgraph $G[X', Y']$?''} takes total time $n^{3-o(1)}$.
\end{hypothesis}

\begin{proof}[Proof of \cref{thm:inc triangle}]
Let $G = (X, Y, E)$ denote a given OuMv instance. In the preprocessing phase we partition $X$ and $Y$ arbitrarily into $g$ groups~\makebox{$X = X_1 \sqcup \dots \sqcup X_g$} and~\makebox{$Y = Y_1 \sqcup \dots \sqcup Y_g$} of size~$O(n/g)$ each (for some parameter $g$ to be determined later). For each pair $(i, j) \in [g]^2$ we construct an incremental triangle detection instance~$I_{i, j}$ as follows: $I_{i, j}$ is a tripartite graph with vertex parts $X_i$, $Y_j$ and~$Z_{i, j}$, where~$Z_{i, j}$ is a fresh set of $\Theta(n/g)$ nodes. Initially we only include the edges in $G[X_i, Y_j]$. We maintain the invariants that each instance $I_{i, j}$ does \emph{not} contain a triangle, and contains at least one \emph{isolated} node in $Z$.

Next, we describe how to implement a query $X' \subseteq X, Y' \subseteq Y$. Enumerate the pairs $(i, j) \in [g]^2$ in an arbitrary order. For each pair, we take an isolated node $z \in Z_{i, j}$, and insert edges from $z$ to all nodes in $X' \cap X_i$ and to all nodes in $Y' \cap Y_j$ in $I_{i, j}$. Note that we introduce a triangle to the instance~$I_{i, j}$ if and only if there is an edge in the induced graph $G[X' \cap X_i, Y' \cap Y_j]$. Thus, if the triangle detection algorithm reports YES, we can stop and report YES for the query $(X', Y')$. Otherwise we continue with the next pair $(i, j)$.

To maintain our invariants, after each query we \emph{reset} all instances $I_{i, j}$ that contain a triangle or that do not contain an isolated node in $Z_{i, j}$ (i.e., we set up a fresh instance $I_{i, j}$ as in the preprocessing phase).

The correctness is clear. To bound the running time, we first analyze how often we reset an instance $I_{i, j}$. On the one hand, each query introduces a triangle to at most one instance $I_{i, j}$ leading to at most $n$ resets. On the other hand, each query decreases the number of isolated nodes in $Z_{i, j}$ by at most 1, hence we reset $G_{i, j}$ only $n / \Theta(n / g) = O(g)$ times. Across all pairs of groups, this leads to $O(g^3)$ resets. Therefore, the total number of resets is $O(n + g^3)$. Assuming that incremental triangle detection can be solved in time $O(n^{3-\epsilon})$ for some $\epsilon > 0$, it thus follows that we can process the OuMv queries in total time $O((n + g^3) \cdot (n/g)^{3-\epsilon})$. Setting $g := \lceil n^{1/3}\rceil$ this becomes $O(n^{3-2\epsilon/3})$, contradicting the OuMv hypothesis and thus the OMv hypothesis~\cite{HenzingerKNS15}.
\end{proof}
\section{Decremental Maximal Clique} \label{sec:clique}
In this section we prove \cref{thm:alg clique}, i.e., we show that one can maintain maximal cliques in all connected components in a decremental graph against an \emph{oblivious adversary}. The proof is in two steps: We first show that one can efficiently maintain a maximal clique in \emph{one} connected component by a simple reduction to dynamic MIS (\cref{sec:clique:sec:one-comp}), and then generalize the algorithm to apply to \emph{all} connected components (\cref{sec:clique:sec:all-comps}).

\subsection{Decremental Maximal Clique in One Connected Component} \label{sec:clique:sec:one-comp}
Let $G = (V, E)$ be a graph. Recall that a \emph{maximal clique (MC)} is a clique $K \subseteq V$ such that for every node $y \in V \setminus K$, there is some node $x \in K$ such that $(y, x) \notin E$. Recall also that a maximal independent set of the complement of $G$ is a maximal clique of $G$ (\cref{obs:mis-complement}).

As $G$ undergoes edge deletions, the complement graph $G^c$ undergoes edge insertions. This would intuitively suggest that a $\tilde{O}(1)$ update time algorithm for incremental maximal independent set can be used in a black-box way to obtain a decremental maximal clique algorithm with the same update time. Unfortunately, a naive reduction to incremental MIS would result in an algorithm with total update time proportional to the size of the complement graph $G^c$, which can be significantly denser than $G$. We show how to avoid this blow-up by a simple trick.

\begin{theorem}
\label{lem:clique:reduction}
Assume that there is a dynamic algorithm $\mathcal A$ maintaining a maximal independent set in an initially empty $n$-vertex graph that undergoes a series of $m$ edge insertions and deletions in $T(n,m)$ total update time. Then there is a dynamic algorithm $\mathcal B$ maintaining a maximal clique in an $n$-vertex $m$-edge graph that undergoes a series of edge deletions in $O(T(n, O(m)))$ total update time. Moreover, if $\mathcal A$ is robust against adaptive adversaries then so is $\mathcal B$.
\end{theorem}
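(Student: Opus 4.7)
The central difficulty is that the complement graph $G^c$ of an $m$-edge graph $G$ can have up to $\Theta(n^2)$ edges, so naively running $\mathcal A$ on $G^c$ would require $\Omega(n^2)$ updates rather than $O(m)$. The trick I would use is to maintain a maximal clique of $G$ that lives inside a small induced subgraph of $G$, and to invoke $\mathcal A$ only on the complement of that subgraph. Concretely, fix a vertex $v \in V$ of minimum \emph{positive} degree in the initial graph $G$ (if no non-isolated vertex exists then $G$ is edgeless and any singleton is a maximal clique). Letting $V^+$ denote the set of non-isolated vertices and using $m \leq \binom{|V^+|}{2}$, a standard counting argument gives $\deg(v) \leq 2m/|V^+| = O(\sqrt{m})$, so $G[N(v)]$ has only $O(\sqrt{m})$ vertices and the complement $H := G[N(v)]^c$ has at most $\binom{\deg(v)}{2} = O(m)$ edges. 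I would instantiate $\mathcal A$ on $H$; by \cref{obs:mis-complement}, the MIS of $H$ maintained by $\mathcal A$ is a maximal clique $K'$ of $G[N(v)]$, and I output $K := \{v\} \cup K'$ as the maintained maximal clique of $G$.

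Correctness is a direct check: $K$ is a clique because $v$ is adjacent to every vertex of $N(v) \supseteq K'$ and $K'$ is itself a clique in $G[N(v)]$; and $K$ is maximal because any $y \notin K$ lies either in $N(v) \setminus K'$ (in which case the maximality of $K'$ inside $G[N(v)]$ furnishes a non-neighbor of $y$ in $K'$) or in $V \setminus (N(v) \cup \{v\})$ (in which case $v$ itself is a witness since $(y, v) \notin E(G)$). When $v$ eventually loses all of its edges we have $N(v) = K' = \emptyset$ and the algorithm outputs $K = \{v\}$, which remains maximal because $v$ is then isolated in $G$.

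The remaining work is to bound the total number of updates sent to $\mathcal A$ by $O(m)$. The initial loading of $H$ costs at most $\binom{\deg(v)}{2} = O(m)$ insertions. Thereafter each edge deletion in $G$ falls into one of three cases: (i) a deletion of $(a,b)$ with $a, b \in N(v)$ triggers one insertion of $(a, b)$ into $H$; (ii) a deletion of $(v, b)$ removes $b$ from $N(v)$ and requires deleting all of $H$'s edges incident to $b$; (iii) any other deletion requires no change to $H$. Case (i) contributes at most $|E(G[N(v)])| \leq O(m)$ insertions in total, and case (ii) removes each edge from $H$ at most once, so it contributes at most as many deletions as the total number of edges ever present in $H$, which is $O(m)$. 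Summing, $\mathcal A$ receives $O(m)$ updates in total, which yields the claimed total running time $O(T(n, O(m)))$; the $\mathcal B$-side bookkeeping (maintaining the adjacency structure of $N(v)$ and enumerating affected pairs) is easily made linear in the number of updates to $H$. Robustness transfers for free: $\mathcal B$'s output is a deterministic function of the adversary's update stream and of $\mathcal A$'s outputs, so any adaptive adversary against $\mathcal B$ induces an adaptive adversary against $\mathcal A$. The main subtle point I expect is the amortized accounting in case (ii), which hinges on the invariant that $|E(H)|$ never exceeds its initial value plus the case-(i) additions, because only case (i) ever adds edges to $H$ and only case (ii) ever removes them.
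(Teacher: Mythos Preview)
Your approach is essentially the same as the paper's: pick a minimum-degree pivot $v$, run $\mathcal A$ on $H = G[N(v)]^c$, and output $\{v\}$ together with the maintained MIS. The degree bound $\deg(v)^2 = O(m)$ and the three-case update classification match the paper almost line for line.

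There is, however, a real bug in your output rule. In case~(ii) you delete all $H$-edges incident to $b$, but $b$ remains a vertex of $H$ (the hypothesis only gives $\mathcal A$ edge updates, not vertex deletions). Hence $b$ becomes isolated in $H$ and is forced into every MIS; in particular $b \in K'$ from then on. But $(v,b)$ is no longer an edge of $G$, so your output $K = \{v\} \cup K'$ is not a clique. Your sentence ``when $v$ eventually loses all of its edges we have $N(v) = K' = \emptyset$'' makes the error explicit: at that point every vertex of $H$ is isolated, so $K'$ equals the \emph{entire} initial neighbourhood $N_0(v)$, not $\emptyset$.

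The fix is the one the paper uses: output $\{v\} \cup (K' \cap N_G(v))$, intersecting with the \emph{current} neighbourhood. Then $K'$ is always the union of the eliminated vertices (each isolated in $H$) and an MIS of the complement of $G$ restricted to the current $N_G(v)$, so $K' \cap N_G(v)$ is a maximal clique of $G[N_G(v)]$ and the rest of your correctness argument goes through unchanged. Your running-time accounting and the robustness transfer are fine as stated.
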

\begin{proof}
We start with a description of $\mathcal B$. Let $G = (V, E)$ denote the given graph. At initialization, we select a lowest-degree vertex $v$ as the \emph{pivot} vertex. We construct the graph $G' = G[N_G(v)]^c$; i.e., the complement of the subgraph induced by the neighborhood of $v$ (excluding $v$ itself). We run the dynamic MIS algorithm $\mathcal A$ on $G'$ to maintain some MIS $S$. At all times we will output the maximal clique $\{v\} \cup (S \cap N_G(v))$.

Whenever an edge $e$ is deleted from $G$, we distinguish three cases: If $e$ involves the designated node $v$, say $e = \{v, w\}$, then we remove all edges incident to $w$ in $G'$.  Let us call such a vertex~$w$ \emph{eliminated}. Otherwise, if $e$ involves only nodes that are present in $G'$, then insert $e$ into $G'$. Otherwise, the edge deletion does not affect $G'$.

Observe that $S$ at all times will consist of the union of an MIS of $G[N_G(v)]^c$ and the eliminated vertices. Hence, the algorithm correctly outputs an maximal clique of $G$. Furthermore, the updates to $G'$ do not depend on the output of $\mathcal A$.

We finally analyze the total update time. Finding $v$ will take $O(m)$ time at initialization. Any potential edge between vertices of $G'$ will be inserted into and removed from $G'$ at most once, hence the there are at most $O(\deg_G(v)^2)$ updates in total. Consequently, the total update time of $\mathcal A$ over the $m$ deletions is $T(\deg_G(v), O(\deg_G(v)^2))$. Finally, recall that $v$ is the smallest-degree vertex of~$G$ (at initialization) and thus $\deg_G(v)^2 \leq n \cdot \deg_G(v) \leq 2 m$. Hence, the total update time of the algorithm is $O(T(n, O(m)))$.
\end{proof}

Combining Lemma~\ref{lem:clique:reduction} with the known dynamic MIS algorithms in total update time $\tilde{O}(n + m)$~\cite{ChechikZ19,BehnezhadDHSS19} yields the following theorem:

\begin{theorem} \label{th:main:clique}
There is a dynamic algorithm maintaining a maximal clique of a decremental graph in $\tilde{O}(n+m)$ expected total update time against an oblivious adversary.
\end{theorem}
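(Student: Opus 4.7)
The plan is to combine Lemma~\ref{lem:clique:reduction} with any off-the-shelf fully dynamic MIS algorithm for oblivious adversaries as a black box. Specifically, I would invoke the algorithms of \cite{ChechikZ19} or \cite{BehnezhadDHSS19}, each of which processes an initially empty $n$-vertex graph undergoing $m'$ edge insertions and deletions in expected total update time $\tilde O(n + m')$ against an oblivious adversary. The MIS algorithm will be run on the auxiliary graph $G'$ constructed in the reduction.

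Plugging this guarantee into Lemma~\ref{lem:clique:reduction} with $T(n, m') = \tilde O(n + m')$ immediately yields a decremental maximal clique algorithm with expected total update time $O(T(n, O(m))) = \tilde O(n + m)$ on an $n$-vertex $m$-edge input. The initialization costs inside the reduction are absorbed within the same bound: finding a minimum-degree pivot $v$ takes $O(n+m)$ time, and the initial construction of $G[N_G(v)]^c$ contributes $O(\deg_G(v)^2) = O(m)$ insertions into $G'$, which is already accounted for in the $O(m)$ total updates that $\mathcal A$ receives.

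The only subtlety to verify is that the composition preserves the oblivious-adversary model, and this is explicit in the proof of Lemma~\ref{lem:clique:reduction}: the updates that $\mathcal B$ issues to $G'$ depend only on which edge of $G$ was deleted, on the fixed pivot $v$, and on which vertices have already been eliminated; they never depend on the independent set $S$ maintained by $\mathcal A$. Consequently, an adversary that is oblivious to the output of $\mathcal B$ induces an update sequence to $G'$ that is oblivious to the output of $\mathcal A$, so the $\tilde O(n + m')$ bound of $\mathcal A$ transfers verbatim. I do not anticipate any real obstacle in this step, since all of the substantive work has already been done in Lemma~\ref{lem:clique:reduction} and in the cited MIS algorithms; the proof of Theorem~\ref{th:main:clique} is essentially a one-line invocation of both.
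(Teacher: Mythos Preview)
Your proposal is correct and matches the paper's own argument essentially verbatim: the paper simply states that combining Lemma~\ref{lem:clique:reduction} with the $\tilde O(n+m)$-time dynamic MIS algorithms of~\cite{ChechikZ19,BehnezhadDHSS19} yields the theorem. Your additional remarks about initialization cost and preservation of the oblivious-adversary model are accurate and already implicit in (or explicitly stated in) the proof of Lemma~\ref{lem:clique:reduction}.
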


In addition, it is known how to maintain a maximal independent set of a fully dynamic graph on~$n$ vertices undergoing $m$ edge updates in $O(nm)$ total update time against an adaptive adversary (for an example see~\cite{AssadiOSS18}). If we substitute such an adaptive algorithm into the reduction we receive a total update time bound of $O(m + \min_{v \in V} \deg_G(v)^3)$. Amortized over the $m  = \Omega(\min_{v \in V} \deg_G(v)^2)$ edge deletions we get $O(\min_{v \in V}\deg_G(v)) = O(\Delta)$ amortized update time.

\begin{theorem}
There is a dynamic algorithm maintaining a maximal clique of a decremental graph~$G$ with maximal degree $\Delta$ in $O(\Delta)$ amortized update time against an adaptive adversary.
\end{theorem}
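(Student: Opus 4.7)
The plan is to directly invoke \Cref{lem:clique:reduction} with an MIS subroutine that is robust against adaptive adversaries. A standard dynamic MIS algorithm (see~\cite{AssadiOSS18}) handles $m$ edge updates on an $n$-vertex graph in $T(n, m) = O(nm)$ total time against an adaptive adversary. The ``moreover'' clause of \Cref{lem:clique:reduction} then yields a decremental maximal clique algorithm $\mathcal B$ that is itself robust, so the correctness portion of the theorem comes essentially for free from the previous reduction.

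For the running time, I would just chase through the quantitative estimate in \Cref{lem:clique:reduction}. Let $v$ be the pivot chosen by $\mathcal B$ and set $d := \deg_G(v)$, which by the choice of pivot equals the minimum degree of $G$. The auxiliary graph $G'$ has only $d$ vertices, and each of its at most $\binom{d}{2} = O(d^2)$ potential edges is inserted into or deleted from $G'$ only a constant number of times, so the MIS subroutine is invoked on at most $O(d^2)$ updates and contributes $T(d, O(d^2)) = O(d^3)$ total time. Together with the $O(m)$ initialization cost to locate $v$, the overall total update time is $O(m + d^3)$. By the handshake inequality, $nd \le 2m$; since $d \le n$ this yields $d^2 \le 2m$, and hence $d^3 \le 2md \le 2m\Delta$. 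The total update time therefore simplifies to $O(m\Delta)$, which amortizes to $O(\Delta)$ per operation over the at most $m$ edge deletions.

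The only real subtlety in this plan is the amortization accounting: if the adversary stops after only $k \ll m$ deletions, then dividing the $O(m\Delta)$ total bound by $k$ does not immediately give $O(\Delta)$ per operation. This is handled in the standard way for decremental algorithms by charging the preprocessing cost to the $m$ initial edges (treated as ``setup'' operations), or equivalently by stating the bound as $O((m+k)\Delta)$ total time after $k$ deletions, both of which amortize to $O(\Delta)$. Apart from this bookkeeping, the entire theorem is a one-line corollary of \Cref{lem:clique:reduction} combined with the handshake inequality, so I do not anticipate any genuine technical obstacle.
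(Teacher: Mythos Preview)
Your proposal is correct and follows essentially the same approach as the paper: plug the $O(nm)$-time adaptive MIS algorithm of~\cite{AssadiOSS18} into \Cref{lem:clique:reduction}, obtain total time $O(m + d^3)$ with $d = \min_v \deg_G(v)$, and use $d^2 \le 2m$ to amortize. The paper phrases the final inequality as $d^3/m \le d \le \Delta$ (yielding the marginally sharper $O(\min_v \deg_G(v))$ amortized bound), whereas you bound $d^3 \le 2m\Delta$ directly, but the arguments are otherwise identical.
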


Finally, we remark that in the proof of \cref{th:main:clique} it was not strictly necessary to take a pivot with minimal degree. In fact, we have the freedom to pick any pivot node, then the total update time is $O(m + T(\deg(v), O(\deg(v)^2)))$. In particular, it would have sufficed to take any node with degree $\deg(v) \leq O(\bar d)$ where $\bar d$ is the \emph{average degree}, as in this case $O(\deg(v)^2) = O(\bar d^2) = O(n \bar d) = O(m)$. We will rely on this observation in the next subsection.

\subsection{Decremental Maximal Clique in Every Connected Component} \label{sec:clique:sec:all-comps}
If the input graph $G = (V, E)$ is not connected, then it might admit some trivial maximal cliques. For example, suppose that $G$ consists of an isolated node $v$ (with degree zero), and the set of remaining nodes $V \setminus \{v\}$ forms a clique. Then, the singleton set $\{v\}$ forms a maximal clique in $G$. We need to be especially mindful of this scenario when we are dealing with a decremental graph, since after sufficiently many edge deletions the graph is guaranteed to become disconnected. With this backdrop, we now strengthen the result derived in  \Cref{th:main:clique} as follows.

We define a set $K \subseteq V$ to be a \emph{maximal clique in every connected component (MCCC)} of an input graph $G = (V, E)$ if for every connected component $C$ of $G$, the subset $K \cap C$ is a maximal clique in $G[C]$. We devote the rest of this section towards proving the following theorem, which generalizes \Cref{th:main:clique} to MCCC, in addition to obtaining a high probability bound on the total update time.

\begin{theorem}\label{th:main:clique:component}
There is a randomized algorithm for maintaining a maximal clique in every connected component in a decremental $n$-vertex graph which undergoes a sequence of edge deletions generated by an \emph{oblivious} adversary. The algorithm has $\tilde{O}(n+m)$ total update time with high probability, where~$m$ denotes the initial number of edges.
\end{theorem}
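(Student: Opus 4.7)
The plan is to combine the single-component algorithm of \cref{lem:clique:reduction} with a decremental connectivity data structure and a random-pivot scheme to handle splits. First, I would maintain the connected components of $G$ using any efficient decremental connectivity routine (e.g., Holm--Lichtenberg--Thorup), costing $\tilde O(m)$ total time. For each component $C$, I would run a separate instance of \cref{lem:clique:reduction} with a chosen pivot $v_C$. Since the instance outputs a maximal clique contained in $\{v_C\} \cup N_G(v_C)$, the existing instance remains valid on the side containing $v_C$ when $C$ splits; only the other side needs a fresh pivot and a new instance.

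The crux is the choice of pivots. I would sample a uniformly random permutation $\pi$ of $V$ once at the start. Whenever a component $C$ is (re-)initialized, let $L_C = \{v \in C : \deg_{G[C]}(v) \leq 4 m_C / n_C\}$; by Markov's inequality $|L_C| \geq n_C/2$. Set $v_C = \arg\min_{v \in L_C} \pi(v)$. The low-degree constraint ensures $\deg(v_C) = O(m_C/n_C)$, so the initialization and lifetime cost of the \cref{lem:clique:reduction}-instance is
\[
O(m_C + T(\deg(v_C), \deg(v_C)^2)) = \tilde O(n_C + m_C)
\]
when plugging in the dynamic MIS algorithms of \cite{ChechikZ19,BehnezhadDHSS19}.

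The key technical claim is that with high probability, each vertex and each edge participates in at most $O(\log n)$ (re-)initializations; this immediately yields total update time $\sum_{\text{(re-)inits}} \tilde O(n_{C'} + m_{C'}) = \tilde O((n+m) \log n) = \tilde O(n+m)$. I would prove this by a standard random-permutation / record-process argument: for a fixed vertex $u$ with evolving component sequence $C_u^{(0)} \supseteq C_u^{(1)} \supseteq \cdots$, a re-initialization involving $u$ is triggered at step $i \to i+1$ exactly when $u$'s current pivot falls outside $C_u^{(i+1)}$. Since that pivot is uniformly random in a set $L^{(i)}$ of size $\geq |C_u^{(i)}|/2$, the per-step probability of this event is at most $2 |C_u^{(i)} \setminus C_u^{(i+1)}|/|C_u^{(i)}|$, and summing telescopically yields $\sum_i |C_u^{(i)} \setminus C_u^{(i+1)}|/|C_u^{(i)}| \leq \ln n$, i.e.\ $O(\log n)$ re-inits per vertex in expectation. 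The high-probability bound follows from the classical concentration of the number of records in a random permutation, together with a union bound over the $n$ vertices and $m$ edges (where for each edge $e$ one runs the same argument on the sequence of components containing both endpoints of $e$).

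The main obstacle is reconciling two conflicting requirements on the pivot: it must be \emph{low-degree} (so that the per-instance initialization cost is $O(m_C)$ rather than $\omega(m_C)$) and simultaneously \emph{random within a constant fraction of $C$} (so that the telescoping record argument controls the expected number of pivot changes). Taking the $\pi$-minimum of $L_C$, which is a size-$\Omega(n_C)$ subset of $C$ by Markov's inequality, resolves this tension: the low-degree condition is enforced by construction, and uniformity over $L_C$ preserves the record-process analysis up to a constant factor of $2$.
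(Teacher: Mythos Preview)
Your approach differs substantially from the paper's and contains a real gap in the record-process analysis.

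The paper does not use a random permutation or a record argument. It first proves an intermediate lemma (\cref{lm:intermediate}): maintain a maximal clique only in the unique component of size $>n/2$, by running $\Theta(\log n)$ independent copies of the single-component algorithm with uniformly random pivots; since the large component holds more than half the vertices, with high probability at least one pivot lands inside it. The theorem then follows by a clean recursion: run \cref{lm:intermediate} on the current graph, and whenever a piece of size $\le n/2$ splits off, recurse on that piece. Every recursive call is on at most half the vertices, so the recursion depth is $O(\log n)$, and a direct recurrence on time and recourse yields $\tilde O(n+m)$.

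In your scheme the crucial claim is that at each step $i$ ``the pivot is uniformly random in a set $L^{(i)}$ of size $\ge |C_u^{(i)}|/2$'', giving per-step re-init probability at most $2|C_u^{(i)}\setminus C_u^{(i+1)}|/|C_u^{(i)}|$. This is not justified. The current pivot was chosen at the \emph{last} re-init step $j\le i$ as the $\pi$-minimum of $L_{C^{(j)}}$, and conditioned on surviving to step $i$ it is (at best) uniform over $L_{C^{(j)}}\cap C^{(i)}$---not over $L_{C^{(i)}}$. Because the low-degree sets $L_{C^{(\cdot)}}$ are \emph{not nested} (a vertex can enter or leave $L$ as the average degree of the shrinking component changes), the intersection $L_{C^{(j)}}\cap C^{(i)}$ can be tiny. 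Concretely: take $|C^{(j)}|=2k$ with $L_{C^{(j)}}=\{w_1,\dots,w_k\}$, and let the oblivious adversary split off $w_2,\dots,w_k$ before step $i$; then $L_{C^{(j)}}\cap C^{(i)}=\{w_1\}$, so conditioned on survival the pivot is $w_1$ deterministically, and if the next split removes $w_1$ the re-init probability is $1$, not $O(1/|C^{(i)}|)$. Your telescoping bound therefore does not follow. A second, related issue is that with one fixed permutation $\pi$ the new pivot at re-init $r$ is not even marginally uniform over $L_{C^{(j_r)}}$, since earlier re-inits have already revealed order information about $\pi$ on overlapping sets; the classical record argument needs nested candidate sets, which fails here. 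The approach may be salvageable (for instance, with fresh independent randomness one can argue via a stopping-time analysis that each re-init shrinks the component by a constant factor with probability $\ge 1/2$), but as written neither the $O(\log n)$ expectation nor the high-probability concentration is established.
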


We first consider the slightly simpler problem of maintaining a maximal clique in the unique largest connected component of size larger than $\frac{n}{2}$ (if it exists):

\begin{lemma} \label{lm:intermediate}
There is a randomized algorithm $\mathcal B^\star$ which maintains in a decremental graph a connected component $C^\star$ of size \smash{$|C^\star| > \frac{n}{2}$} along with a maximal clique $K^\star \subseteq C^\star$ in $G[C^\star]$, as long as such a component $C^\star$ exists. It runs in $\tilde{O}(n+m)$ expected total update time.
\end{lemma}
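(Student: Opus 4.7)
The plan is to adapt the reduction of \cref{lem:clique:reduction} to the largest connected component by choosing the pivot randomly and re-picking it whenever it leaves $C^\star$. A standard decremental connectivity data structure maintains $C^\star$ (the unique component of size $> n/2$, if one exists) in $\tilde{O}(n + m)$ total time, which will not be the bottleneck. The algorithm then proceeds in \emph{phases}. At the start of phase $i$ let $C^\star_i$ denote the current $C^\star$, $S_i := |C^\star_i|$, and $\bar d_i := 2|E(G[C^\star_i])|/S_i$. Sample a pivot $v_i$ uniformly at random from $C^\star_i$ and resample whenever $\deg_G(v_i) > 4 \bar d_i$; Markov's inequality guarantees that at least $\tfrac{3}{4}$ of the vertices of $C^\star_i$ satisfy this bound, so $O(1)$ samples suffice in expectation and $v_i$ is distributed uniformly on a set $L_i \subseteq C^\star_i$ with $|L_i| \geq \tfrac{3}{4} S_i$.

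After fixing $v_i$, I would mirror the reduction of \cref{lem:clique:reduction} by running a fully dynamic MIS algorithm~\cite{BehnezhadDHSS19,ChechikZ19} on the auxiliary graph $G' := G[N_G(v_i)]^c$ and outputting $K^\star := \{v_i\} \cup (S \cap N_G(v_i))$, where $S$ is the current MIS of $G'$. Updates to $G'$ follow the same rules as in \cref{lem:clique:reduction}: deleting $\{v_i, u\}$ in $G$ eliminates $u$ in $G'$ (forcing it into $S$), deleting an edge within $N_G(v_i)$ becomes an insertion in $G'$, and all remaining deletions do not affect $G'$. The same maximality argument as in \cref{lem:clique:reduction} shows that $K^\star$ is a maximal clique of $G[C^\star_i]$ at every point in time. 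A phase ends when $v_i$ leaves $C^\star$; we then begin the next phase (or halt if no component of size $> n/2$ remains). Because $\deg_G(v_i) \leq 4 \bar d_i$ and $\bar d_i^2 \leq \bar d_i \cdot S_i \leq 2m$, the cost of a single phase is dominated by building and maintaining $G'$ and amounts to $\tilde{O}(\deg_G(v_i)^2) = \tilde{O}(m)$.

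The hard part will be bounding the expected number of phases by $O(1)$. Fix the adversary's update sequence; by obliviousness it is independent of all random pivot choices. During phase $i$ the set $C^\star$ undergoes a deterministic sequence of \emph{shrinkage events} $T_1 < T_2 < \cdots$, at which disjoint vertex sets $Q_1, Q_2, \dots \subseteq C^\star_i$ leave $C^\star$; write $c_1 := S_i$ and $c_{j+1} := c_j - |Q_j|$ for the size of $C^\star$ right after $T_j$. The phase ends at the first $T_j$ with $v_i \in Q_j$, and then $S_{i+1} = c_{j+1}$ (or $0$ if $c_{j+1} \leq n/2$). Since $v_i$ is uniform in $L_i$, we have $\Pr[v_i \in Q_j] \leq |Q_j|/|L_i| \leq \tfrac{4|Q_j|}{3 S_i}$; combining with the identity $|Q_j| = c_j - c_{j+1}$, the inequality $(c_j - c_{j+1})\,c_{j+1} \leq \tfrac{1}{2}(c_j^2 - c_{j+1}^2)$, and a telescoping sum yields
\begin{equation*}
\mathbb{E}[S_{i+1} \mid S_i] \,\leq\, \frac{4}{3 S_i} \sum_j (c_j - c_{j+1})\, c_{j+1} \,\leq\, \frac{4}{3 S_i} \cdot \frac{S_i^2}{2} \,=\, \frac{2}{3}\, S_i.
\end{equation*}
Iterating gives $\mathbb{E}[S_i] \leq (2/3)^{i-1} n$, Markov's inequality gives $\Pr[S_i > n/2] \leq 2 (2/3)^{i-1}$, and so the expected number of phases is $O(1)$. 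Multiplying the $O(1)$ expected phases by the $\tilde{O}(m)$ per-phase cost and adding the $\tilde{O}(n+m)$ connectivity overhead gives the claimed expected total update time of $\tilde{O}(n+m)$.
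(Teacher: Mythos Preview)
Your argument is correct, but it takes a genuinely different route from the paper's proof. The paper does not use a phase-restart scheme at all: it simply launches $\gamma=\Theta(\log n)$ \emph{parallel} copies of the decremental maximal-clique algorithm $\mathcal B$ from \cref{lem:clique:reduction}, each with an independent uniformly random pivot $v_1,\dots,v_\gamma\in V$, and always reports the clique of the smallest index $i$ with $v_i\in C^\star$. The analysis is then a one-line union bound: since $|C^\star|>n/2$ at every moment it exists and the graph is decremental, each fixed pivot lies in the final $C^\star$ (and hence in $C^\star$ throughout) with probability $>1/2$, so with high probability some copy is valid at all times. The expected per-copy cost is $\tilde O(n+m)$ because a uniformly random pivot has expected degree equal to the average degree.

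Your sequential-phase approach is more economical (an expected $O(1)$ restarts versus $\Theta(\log n)$ parallel copies) and the telescoping estimate $\mathbb{E}[S_{i+1}\mid\sigma_i]\le\frac{2}{3}S_i$ is a nice self-contained argument that does not need the ``at least one of $\log n$ coins stays heads'' trick. The price is a more delicate analysis: you have to track the random stopping time of each phase and condition on the full state $\sigma_i$ at the phase start (your notation $\mathbb{E}[S_{i+1}\mid S_i]$ is slightly informal, since the shrinkage sequence $c_1,c_2,\dots$ is determined by $\sigma_i$, not by $S_i$ alone, but the tower rule recovers the unconditional bound you use). The rejection sampling on degree is a clean way to make the per-phase cost deterministically $\tilde O(m)$ and thus decouple it from the random phase count; note that it is not strictly needed, since without rejection one gets the tighter $\mathbb{E}[S_{i+1}\mid\sigma_i]\le\frac{1}{2}S_i$ and $\mathbb{E}[\deg(v_i)^2\mid\sigma_i]\le 2m$ already suffices for the total-cost bound. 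Either way, both proofs land at $\tilde O(n+m)$ expected total time; the paper's is shorter, yours shaves a $\log n$ from the overhead.
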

\begin{proof}
Using the classic dynamic algorithm by Holm, Lichtenberg and Thorup~\cite{HolmLT01} we can throughout maintain the connected components in $G$. In particular, we can maintain the largest connected component~$C^\star$ as long as it has size $|C^\star| > \frac{n}{2}$. It remains to maintain a maximal clique~\makebox{$K^\star \subseteq C^\star$}.

Let $\mathcal B$ be the decremental algorithm from \cref{th:main:clique} (where we plug in any MIS algorithm with linear total update time~\cite{ChechikZ19,BehnezhadDHSS19}). Let $\gamma = \Theta(\log n)$ and let $\mathcal B_1, \dots, \mathcal B_\gamma$ denote independent copies of $\mathcal B$ where in each copy $\mathcal B_i$ we pick a \emph{uniformly random} pivot node $v_i$. To process the edge deletions, we will run all copies $\mathcal B_1, \dots, \mathcal B_\gamma$ in parallel. Let $K_i$ denote the maximal clique maintained by $\mathcal B_i$. At any point during the execution we call~$i$ \emph{valid} if $v_i \in C^\star$ (note that this can be tested in constant time). After each update we take the smallest valid index $i$, and report $K^\star := K_i$.

It is easy to verify that the algorithm is correct, provided that there always is some index $i$ that is valid. Indeed, if $i$ is valid then $K_i \subseteq C^\star$ is a maximal clique (as $\mathcal B_i$ maintains a maximal clique in the connected component of $v_i$). To see that there always is a valid index, recall that each pivot $v_i$ is chosen uniformly at random, and thus part of $C^\star$ with probability at least $\frac{1}{2}$. (Here we are crucially use that the adversary is oblivious.) It follows that there is a valid index with high probability~\makebox{$1 - (\frac{1}{2})^{\gamma} = n^{-\Omega(1)}$}.

We finally analyze the expected running time. First, note that each copy $\mathcal B_i$ runs in expected time $\tilde O(n + m)$. To see this, recall that the pivot vertex $v_i$ is chosen uniformly at random and thus its expected degree is $\Ex[\deg(v)] = \bar d$. It follows that the expected running time is $\tilde O(n + \deg(v)^2) = \tilde O(n + n \deg(v)) = \tilde O(n + n \bar d) = \tilde O(n + m)$. In addition, to report the set $K^\star$ the algorithm spends time proportional to the recourse of the $\mathcal B_i$'s. The only exception is when after some update the active index $i$ changes. However, note that this can happen at most $\gamma = O(\log n)$ times (as clearly all invalid indices $i$ will stay invalid for all future updates) and so the overhead to deal with this exceptional case is $O(n \log n)$.
\end{proof}

\begin{proof}[Proof of \cref{th:main:clique:component}]
We design a recursive dynamic algorithm $\mathcal C$ to maintain maximal cliques in all connected components. Throughout we run~\cite{HolmLT01} to dynamically maintain the connected components in the decremental graph $G$. Let $C_1, \dots, C_\ell$ denote the connected components of size at most $\frac{n}{2}$ at initialization, and let $C^\star$ denote the unique component of size more than $\frac{n}{2}$ (if it exists). We run $\mathcal C$ recursively on $G[C_1], \dots, G[C_\ell]$. And we run the algorithm $\mathcal B^\star$ from \Cref{lm:intermediate} on $G[C^\star]$.

We process the edge deletions as follows. Each edge deletion outside $C^\star$ is simply passed to the respective recursive call. Each edge deletion inside $C^\star$ is passed to $\mathcal B^\star$. We report the union of all the sets reported by the recursive calls and by $\mathcal B^\star$. If as the result of some edge deletion the connected component $C^\star$ becomes disconnected into two connected components~$C', C''$ with $|C'| \geq |C''|$, say, we distinguish two cases. If $|C'| \leq \frac{n}{2}$, then we stop the execution of $\mathcal B^*$ and we recursively call $\mathcal C$ on~$G[C']$ and~$G[C'']$. If $|C'| > \frac{n}{2}$ then we keep executing $\mathcal B^\star$ on $C^\star \gets C'$, and we call $\mathcal C$ recursively only on~$G[C'']$.

It is easy to argue inductively that the algorithm throughout reports a MCCC. It remains to bound the total update time of $\mathcal C$. We start with a bound on the expected update time, and later comment how to obtain a high-probability bound. Let $T_{\mathcal C}(n, m)$ denote the expected running time of~$\mathcal C$, and let $R_{\mathcal C}(n, m)$ denote the expected recourse of $\mathcal C$, and define $T_{\mathcal B^\star}(n, m)$ and $R_{\mathcal B^\star}(n, m)$ similarly. Let $n_1, \dots, n_k$ and $m_1, \dots, m_k$ denote the number of nodes and edges in the recursive calls; clearly we have $\sum_i n_i \leq n$ and $\sum_i m_i \leq m$. We have the following recurrences:
\begin{align*}
    R_{\mathcal C}(n, m) &\leq R_{\mathcal B^\star}(n, m) + \sum_{i=1}^k R_{\mathcal C}(n_i, m_i), \\
    T_{\mathcal C}(n, m) &\leq O(T_{\mathcal B^\star}(n, m)) + \sum_{i=1}^k T_{\mathcal C}(n_i, m_i) + \tilde O\left(\sum_{i=1}^k R_{\mathcal C}(n_i, m_i)\right).
\end{align*}
Recalling that $R_{\mathcal B^\star}(n, m) \leq T_{\mathcal B^\star}(n, m) = \tilde O(n + m)$, the first recurrence solves to $R_{\mathcal C}(n, m) = \tilde O(n + m)$ and therefore the second recurrence solves to $T_{\mathcal C}(n, m) = \tilde O(n + m)$ as well.

Finally, we can turn $\mathcal C$ in a black-box manner into an algorithm that runs in total update time $\tilde O(n + m)$ \emph{with high probability}. The idea is standard: We run $\Theta(\log n)$ independent copies of $\mathcal C$, and interrupt each copy that takes total time more than $2 T_{\mathcal C}(n, m)$. By Markov's inequality each copy runs in time at most $2 T_{\mathcal C}(n, m)$ with probability at least $\frac{1}{2}$, and so with high probability there is indeed at least one copy which is not interrupted. The total time is \smash{$\tilde O(T_{\mathcal C}(n, m)) = \tilde O(n + m)$}.
\end{proof}
\section{Conclusion and Open Problems} \label{sec:open}
\Cref{thm:hardness MIS,thm:hardness clique} give the first update-time separation between dynamic algorithms against oblivious adversaries and those against adaptive adversaries in \emph{natural} dynamic problems with explicit input based on popular fine-grained complexity hypotheses. We hope these results will pave the way for further separation of natural dynamic problems. Below, we suggest exciting open problems and then survey related interesting separations in this area and their relation to our results.

\paragraph{More Natural Separations.}
Can we show more oblivious-vs-adaptive separations for natural dynamic problems? The next natural targets are other symmetric breaking problems besides MIS. It would be very interesting to see separations for dynamic maximal matching or $(\Delta+1)$-vertex coloring.

\paragraph{Improved Robust Dynamic MIS.}
In addition to our main open problem of achieving an exponential separation for dynamic MIS when $\omega=2$, our lower bounds motivate several interesting algorithmic avenues.

Although the update time is tight at $n^{1\pm o(1)}$ for \emph{combinatorial} robust algorithms, can we achieve~$n^{0.99}$ update time using fast matrix multiplication? Even within combinatorial approaches, how much can we leverage graph sparsity? A deterministic fully dynamic algorithm~\cite{AssadiOSS18} has $O(\min\{\Delta,m^{2/3}\})$ update time, where $\Delta$ and $m$ are the maximum degree and the number of edges, respectively. This bound improves to $O(\min\{\Delta,m^{1/2}\})$ in the incremental setting~\cite{GuptaK21}. Another deterministic fully dynamic algorithm by~\cite{OnakSSW20} has $\tilde{O}(\alpha^{2})$ update time, where $\alpha\le\min\{\Delta,m^{1/2}\}$ is the arboricity. Is there a fully dynamic algorithm with $\tilde{O}(\alpha)$ update time? This would subsume all these prior results.

\subsection*{Related Separations}
Finally, we survey known separation results related to ours.

\paragraph{Oblivious vs Adaptive: Recourse.}
It is known that non-robust dynamic algorithms can outperform robust ones in \emph{recourse}, i.e., the number of updates to the algorithm's output. For example, there are non-robust algorithms for decremental spanners~\cite{BaswanaKS12} and incremental $k$-median~\cite{FichtenbergerLN21} whose recourse is \emph{sublinear in the number of updates}, while every robust algorithm requires linear recourse. However, this is not useful for update-time separation, as every algorithm must spend \emph{time} at least linear in the number of updates.\footnote{Recourse trivially lower bounds the update time for every dynamic problem on \emph{explicitly maintaining} objects such as spanners or maximal independent sets. Another type of dynamic problems is to support \emph{answering queries} such as ``is a given edge in a maximal matching?''. In these problems, recourse is not even defined, but, informally, the ``internal recourse'' is possibly larger than the update time. For example, data structures may maintain several objects and switch the pointer between these objects to answer queries quickly (see e.g.~\cite{BernsteinFH21}).}

\paragraph{Oblivious vs Deterministic.}
With space restrictions, there is a separation between non-robust and deterministic dynamic algorithms. In the list labeling problem for $n$ items and $m=O(n)$ space, every deterministic algorithm requires $\Omega(\log^{2}n)$ amortized time~\cite{BulanekKS15}, whereas there exists a randomized algorithm against an oblivious adversary with $O(\log n(\log\log n)^{3})$ amortized time~\cite{BenderCFKKW22,BenderCFKKKS24}. However, with larger space $m=n^{1+\Theta(1)}$, the optimal time of both deterministic and randomized algorithms coincides at $\Theta(\log n)$~\cite{BulanekKS13}.

\paragraph{Randomized vs Deterministic: Static.}
The sublinear-time algorithm community has shown that various problems admit randomized static algorithms with sublinear time when the input is prepared as an oracle. For most such problems, deterministic sublinear-time algorithms cannot exist due to the standard argument that the adversary has complete freedom to choose the unrevealed parts of the input to deterministic algorithms (see, e.g.,~\cite{Goldreich17}).

In contrast, above the sublinear-time regime, there are almost no results separating deterministic algorithms from randomized ones. The exception is the univariate polynomial identity testing problem~\cite{Williams16}, which admits a near-linear time randomized algorithm but requires quadratic time deterministically assuming NSETH~\cite{Williams16}. The same holds for the tropical tensor problem~\cite{AbboudR18}, assuming that we cannot prove certain circuit lower bounds.

\paragraph{Incremental vs Decremental.}
To our knowledge, the only prior incremental-vs-decremental separation similar to ours (\Cref{thm:inc triangle,thm:dec triangle}) was in the context of \emph{sensitivity oracles}; unlike classic dynamic algorithms, these data structures can handle only a \emph{single} batch of updates. Given a graph with $n$ vertices and $m$ edges, there exists a connectivity oracle with $m^{1+o(1)}$ preprocessing time and space that handles $d$ vertex deletions in $\tilde{O}(d^{2})$ update time and can then answer any pairwise-connectivity query in $O(d)$ time~\cite{DuanP20,LongS22,LongW24}. However, for any growing function $f(\cdot)$, every connectivity oracle for handling $d$ vertex insertions with $f(d)n^{o(1)}$ update and query time must take $\Omega(n^{\omega})$ preprocessing time and $\Omega(n^{2})$ space, assuming fine-grained complexity hypotheses~\cite{LongW24}.\footnote{Lemmas 7.3 and 7.10 of~\cite{LongW24} were stated for the fully dynamic setting, but they also hold in the incremental setting.}

\paragraph{Amortized vs Worst-Case.}
There is an amortized-vs-worst-case separation for the fully dynamic MIS problem: it admits algorithms with $\polylog(n)$ \emph{expected worst-case} update time~\cite{ChechikZ19,BehnezhadDHSS19}, which implies $\polylog(n)$ amortized update time. However, every incremental MIS algorithm requires $\Omega(n)$ worst-case recourse, and hence, $\Omega(n)$ worst-case update time~\cite{AssadiOSS18}.

\bibliographystyle{alphaurl}
\bibliography{refs}

\end{document}